\newcommand{\system}{Helia}
\newcommand{\dataplane}{data plane}
\crefname{section}{\S}{\S}
\crefname{page}{page}{pages}
\crefname{paragraph}{Section}{Sections}
\Crefname{section}{Section}{Sections}
\crefname{equation}{Eq.}{Eqs.}
\Crefname{equation}{Equation}{Equations}
\newcommand{\prf}[2]{\ensuremath{\textsc{PRF}_{#1}(#2)}}
\newcommand{\mac}[2]{\ensuremath{\textsc{MAC}_{#1}(#2)}}
\newcommand{\authenc}[2]{\ensuremath{\textsc{AuthEnc}_{#1}(#2)}}
\newcommand{\key}[1]{\ensuremath{\textsc{K}_{#1}}}
\newcommand{\auth}[1]{\ensuremath{\alpha_{#1}}}
\newcommand{\bw}[1]{\ensuremath{\beta_{#1}}}
\newcommand{\rvf}[1]{\ensuremath{\varphi_{#1}}}
\newcommand{\tsexp}[1]{\ensuremath{\mathrm{tsExp}_{#1}}}
\newcommand{\tspkt}[1]{\ensuremath{\mathrm{tsPkt}}}
\newcommand{\lenb}[1]{\ensuremath{\mathrm{lenB}}}
\newcommand{\lenvf}[1]{\ensuremath{\mathrm{lenVF}}}
\newcommand{\lenpkt}[1]{\ensuremath{\mathrm{len(pkt)}}}
\newcommand{\matentry}[3]{\ensuremath{\textsc{M}_{(#1)}^{#2,#3}}}
\newcommand{\drkey}[2]{\ensuremath{\textsc{K}_{#1 \to #2}}}
\newcommand{\bloom}[1]{\ensuremath{\textsc{B}_\text{#1}}}
\newcommand{\enumfont}{\bfseries}
\newcommand{\nint}[1]{\textrm{round}(#1)}
\begin{document}

\title{Protecting Critical Inter-Domain Communication through Flyover Reservations}

\settopmatter{authorsperrow=4}

\author{Marc Wyss}
\affiliation{%
  \institution{ETH Zurich}
  }
\email{marc.wyss@inf.ethz.ch}

\author{Giacomo Giuliari}
\affiliation{%
  \institution{ETH Zurich}
  }
\email{giacomog@inf.ethz.ch}

\author{Jonas Mohler}
\affiliation{%
  \institution{ETH Zurich}
  }
\email{jonas@mohlers.ch}

\author{Adrian Perrig}
\affiliation{%
  \institution{ETH Zurich}
}
\email{adrian.perrig@inf.ethz.ch}

\begin{abstract}
    To protect against naturally occurring or adversely induced congestion in the
    Internet, we propose the concept of \emph{flyover reservations}, a
    fundamentally new approach for addressing the availability demands of
    critical low-volume applications. In contrast to path-based reservation
    systems, flyovers are fine-grained ``hop-based'' bandwidth reservations on
    the level of individual autonomous systems. We demonstrate the scalability
    of this approach experimentally through simulations on large graphs.
    Moreover, we bring the flyovers' potential to full fruition by introducing
    \emph{\system{}}, a protocol for secure flyover reservation setup and data
    transmission. We evaluate \system{}'s performance based on an
    implementation in DPDK, demonstrating authentication and forwarding of
    reservation traffic at \SI{160}{Gbps}. Our security analysis shows that
    \system{} can resist a large variety of powerful attacks against
    reservation admission and traffic forwarding.
    Despite its simplicity, \system{} outperforms current state-of-the-art
    reservation systems in many key metrics.
\end{abstract}

\maketitle
\pagestyle{plain}

\section{Introduction}

Given the lack of delivery guarantees for traffic traversing the Internet,
companies requiring high availability are forced to turn to other more
expensive networked services. ISPs provide high-uptime connectivity services to
their customer's critical applications---financial services, command and
control, and others---for a hefty premium. 
These advanced services take the form of end-to-end \emph{bandwidth
reservations}, e.g., MPLS tunnels, where a certain amount of bandwidth is
exclusively allocated for the communication between the endpoints. Traffic is
then shielded from external congestion even in the case of denial of service
(DoS) attacks. However, such solutions are inflexible and expensive, as they
require a single entity to manage the whole infrastructure between the
endpoints~\cite{QoSbook}.
In an effort to achieve the benefits of bandwidth reservations at lower cost,
and for a larger fraction of traffic, recent work proposed systems that enable
\emph{inter-domain} bandwidth reservations~\cite{Colibri,GLWP}. The protocols
thus developed reserve bandwidth for individual traffic sources across
autonomous systems (ASes, the networks forming the Internet), on the whole
communication path to the destinations.
Reservation traffic is cryptographically authenticated and forwarded with
priority, protecting it from congestion.

\medskip\noindent
In this paper,  we present a radically new design for inter-domain bandwidth
reservations. Instead of reserving bandwidth on an entire path as in previous
systems, sources can reserve bandwidth for single AS hops. Sources can compose these ``hop'' reservations---which we call \emph{flyovers}
\footnote{Analogous to highway overpasses alleviating congestion at
intersections.}
---to
create end-to-end--protected paths across the Internet. 
This simple construction has two major consequences. First, it simplifies the
reservation admission and accounting. Flyover reservations are two-party
contracts between the source and the remote AS (the \emph{flyover provider}),
and can thus avoid the complex setup procedure that, in previous systems,
requires the active involvement of all on-path ASes. Second, flyovers allow a
more effective use of resources, since sources have freedom in the allocation
of flyover bandwidth to flows. In a path-based approach, the reservation is
tied to the whole path, and bandwidth is wasted if the flow terminates before
the reservation expires. With flyovers, sources can allocate bandwidth to
concurrent flows sharing parts of a path, reducing the waste.
\emph{In essence, the simplicity and flexibility of flyovers enables more
efficient bandwidth reservation protocols, that are then also more impervious
to resource-exhaustion DoS attacks.}

We back this claim by designing and implementing Helia, a fully fledged
inter-domain bandwidth reservation system.
Thanks to its flyover-based design, Helia can achieve over \num{2}$\times$ faster forwarding of reservation-protected traffic, more than \num{10000}$\times$ larger reservations in the median, and up to \num{4} orders of magnitude faster reservation bandwidth computation w.r.t. the state-of-the-art systems we compared against.
Helia allows traffic sources to securely create flyover reservations with
remote ASes, compose them into an end-to-end path, and forward critical traffic
under the protection of the reservation at record speeds: with \SI{6.45}{Mpps} on a
single core, it achieves \SI{150}{Gbps} of flyover traffic using \num{4} cores on a commodity server. Even more, the
reservation setup protocol is so simple it can be run on the data fast path,
removing the need for additional control-plane infrastructure.

In line with previous work, we secure Helia against attacks. Thanks to strong
per-packet source authentication, the system can detect reservation forgery,
spoofing, and the overuse of legitimate reservations. However, the
flyover-based design also provides additional benefits from a security and
availability standpoint. Since flyovers can be reused to compose multiple
paths, the number of reservation setup rounds---the most vulnerable step in the
protocol---is lower than in path-based reservation protocols, reducing the
attack surface. Then, the vast speedup in the admission procedure makes DoS
attacks on this component much harder.
Further, overuse monitoring is much more timely and precise in Helia: Since
flyover providers only have to keep track of reservation traffic sources, they
can deploy deterministic monitoring of reservation overuse, with zero error
and minimal overhead---only \SI{800}{kB} of memory for one hundred thousand
ASes.\footnote{There are around \num{75} thousand ASes in the Internet
today~\cite{cidr}.} The potentially exponential number of paths, on the
contrary, forces path-based systems to use probabilistic monitoring schemes,
which are not fully reliable in detecting misbehavior and require separate
infrastructure.
Finally, the simple reservation accounting at flyover providers allows us to
prove a bounded ``time to reservation'': within this deadline from the first
request packet a reservation will be made available to the source. This is
particularly useful when trying to establish a reservation under a DDoS attack.
The price that \system{} pays for the simplicity of its reservation accounting is low reservation granularity; the source can not negotiate the size of its flyover reservations, but instead obtains a fair share of the total flyover bandwidth.

\medskip\noindent
In summary, we develop \textbf{two major contributions} to advance the
state-of-the-art of secure bandwidth reservation systems.
We first \textbf{establish the algorithmics behind flyover reservations},
specifying how to compose flyovers, and designing an algorithm to assign
flyover bandwidth fairly. We show through simulations on large random graphs
that flyover composition provides superior scalability compared to
GLWP~\cite{GLWP}, a path-based bandwidth reservations system.
Second, we \textbf{design and implement \system{}, a protocol to establish and
authenticate flyover reservations at line rate and entirely on the data fast
path}. This surprising result is possible thanks to the extremely simple
algorithms developed in our first contribution and the use of high-speed
symmetric key cryptography~\cite{PISKES}.
We implement \system{} in DPDK, and benchmark its admission and forwarding capability. We finally discuss \system{}'s security and availability properties. 

Beyond our results on the scalability of flyover reservations, we hope our
paper will re-kindle the interest of the security community in the difficult
problem of providing affordable and robust traffic delivery guarantees in the
Internet.

\section{Background}
Given the broad problem setting, we provide background on inter- and intra-domain traffic engineering, reservation systems, and the building blocks of our designs.

\subsection{Traffic Engineering \& Critical Applications}

\paragraph{Autonomous Systems} 
The Internet is an interconnection of centrally operated networks called
autonomous systems (ASes), such as Internet service providers (ISPs) and
transit providers. ASes connect with each other at \emph{peering points} or
\emph{interfaces}, where border routers forward traffic between the networks.
It is therefore common to distinguish between \emph{intra-domain} forwarding,
happening inside a single AS, and \emph{inter-domain} forwarding, where traffic
is relayed across multiple ASes.
This structure is shown in \cref{fig_AS}a).

\paragraph{Critical Applications}
The majority of ASes' bandwidth is devoted to forwarding best effort traffic,
usually generated by low-priority applications such as video streaming and web
browsing. A small fraction of the deployed bandwidth is however dedicated to
premium services for high-paying customers that require extremely high uptimes. 
These critical applications include financial settlements, sensitive
information transfer, remote command and control, key-exchange protocols, and
others. One common aspect of these applications, aside from their high-reliability requirements, is their low bandwidth footprint. As an example, the 
SWIFT financial network, which accounts for a large fraction of global 
inter-bank transactions, requires on average less than \SI{1}{Mbps} between all of its \num{11000} member institutions~\cite{GLWP}.

\paragraph{Intra-Domain Reservations}
Network operators employ many diverse strategies to protect critical application
traffic from congestion and failures.
Complete traffic separation is achieved by using leased lines or MPLS tunnels
(sometimes called ``virtual leased lines'')~\cite{MPLS}, thus providing the
strongest possible guarantees. Bandwidth reservations can also be implemented
by marking packets (e.g., using DiffServ~\cite{DiffServ}), and then configuring
appropriate queuing disciplines at all intermediate switching elements to
prioritize their forwarding.
However, these systems either require ad-hoc infrastructure (leased lines), or
extensive configuration (MPLS tunnels, DiffServ), and can therefore be provided
only within an operator's domain. To this day, extending bandwidth
reservations, e.g., MPLS tunnels, across multiple ASes is a manual process,
enforced by long-standing contracts between operators. A further limiting
factor is that the security of these systems depends on their centralized
operation (e.g., to ensure that only allowed routers can set packet
flags). This assumption breaks down in the decentralized inter-domain setting.

In this paper we abstract away from the details of intra-domain reservations, and
provide a system to quickly and securely create, compose, and tear down
reservations across multiple domains.

\subsection{Secure Bandwidth Reservation Systems} \label{sec_background_reservation_systems}
A few relatively recent systems promise to overcome the security limitations of traditional bandwidth reservations protocols. To the best of our knowledge, these are: SIBRA~\cite{Basescu2016SIBRA}, GLWP~\cite{GLWP}, and Colibri~\cite{Colibri}. As SIBRA is the predecessor of Colibri, we do not directly compare against it and focus on Colibri and GLWP only.

GLWP and Colibri are both \emph{path-based} reservation protocols. In a setup
phase, the source AS sends a reservation packet over the path for which it
wants to establish a reservation, where each on-path AS adds local information 
to the request packet. Informed by this data-collection step, ASes can compute
the amount of reserved bandwidth they wish to grant to the source, and issue an
authorization token. 
This token is used by the source AS at forwarding time to
prove that its traffic is allowed to use the reservation.
Thanks to this setup phase, GLWP and Colibri ensure that no over-allocation
occurs: at any point in time, the sum of bandwidth reservations going through
some link must not be greater than the link capacity. 

GLWP and Colibri also rely on other subsystems to prevent DoS attacks: A
duplicate-suppression system filters out replayed packets~\cite{replay2017},
and a probabilistic bandwidth monitor detects the bandwidth overuse from
malicious sources~\cite{Wu2014,Sivaraman2017,loft}.
\Cref{sec_related} further describes the distinctive
properties of these two systems.

\subsection{Technical Building Blocks}

To complete the background, we describe \emph{allocation matrices}, which
encode the information required to bootstrap flyover reservations, and  DRKey~\cite{PISKES}, a symmetric key distribution system that allows Helia to scalably authenticate reservations.

\begin{figure}[t]
    \centering
    \includegraphics[width=0.4\textwidth]{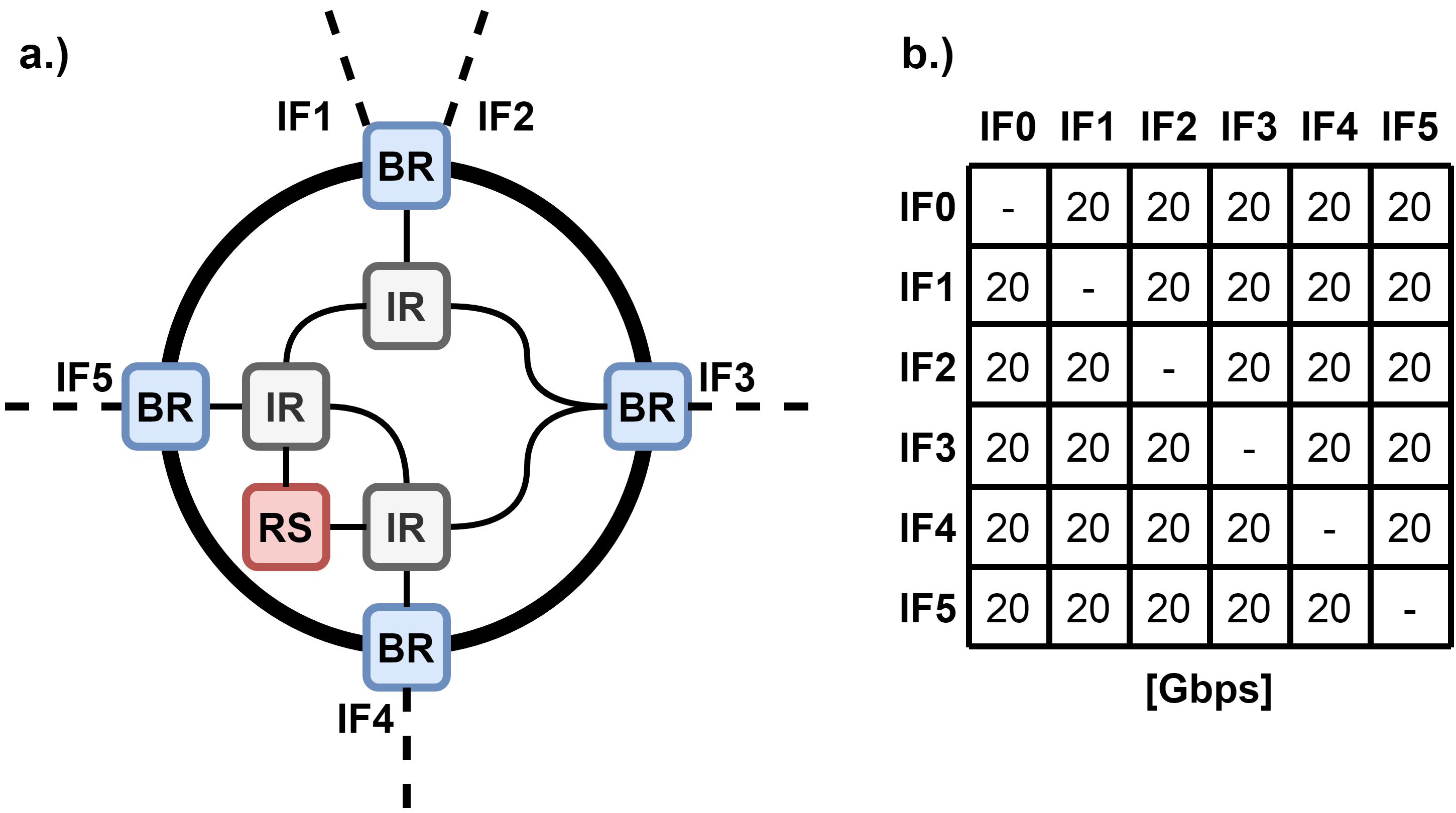}
    \caption{{\textbf{a.) Example intra-AS network.} It consists of multiple internal routers (IRs) and infrastructure components such as a reservation service (RS). Communication with other ASes happens through border routers (BRs). A border router handles one or more interfaces (IFs), where each interface is connected through a link to another AS.
    \textbf{b.) Example allocation matrix}, assuming inter-domain link capacities of \SI{100}{Gbps}. IF0 is an auxiliary interface representing communication from and to entities within the AS.
}}
    \label{fig_AS}
\end{figure}

\paragraph{Allocation Matrices}
A common way to represent intra-AS forwarding capabilities is to encode the
traffic entering and exiting the network in a (possibly time-varying) traffic
matrix.
Similarly, an \emph{allocation matrix} represents the intra-AS reservation
capacity. The allocation matrix entry \matentry{A}{i}{j} at the $i$th row and
$j$th column indicates the amount of reservable bandwidth AS~$A$ can guarantee
from interface~$i$ to interface~$j$ through its internal network.
To ensure that no congestion occurs, our reservation protocol must enforce that
the aggregate traffic traversing any interface pair $(i, j)$ is always  lower
than the maximum available reservation bandwidth \matentry{A}{i}{j}.
This property is called \emph{no over-allocation}~\cite{GMA}.
The allocation matrix does not capture the bandwidth used by
best-effort traffic. However, unused reservation
bandwidth is repurposed to also carry best-effort traffic, thus avoiding wasted resources.

\paragraph{Key Distribution with DRKey} 
Using public-key cryptography is too computationally expensive to authenticate reservation information, and it would introduce a DoS attack vector if directly applied to the problem of packet authentication.
In contrast, symmetric cryptography is efficient, but requires distributing and storing shared keys.
DRKey eliminates the need for keeping any state or fetching keys by enabling the \emph{dynamic re-computation} of keys at border routers~\cite{PISKES}.
Using DRKey, a border router in AS~$A$ can derive a symmetric key with any other AS~$B$ by applying a pseudorandom function to the identifier of AS~$B$:

\begin{equation} 
    \drkey{A}{B} = \prf{\key{A}}{B} \label{eq_drkey} \\
\end{equation}

Then, the only information the border router needs to store is \key{A}, a
secret key only known to trusted infrastructure, i.e., other border routers and
machines inside AS~$A$. On the other hand, a service in AS~$B$ needs to fetch
\drkey{A}{B} from AS~$A$, where this exchange is protected by asymmetric
cryptography. The arrow in \drkey{A}{B} indicates the asymmetry in the key
establishment: entities in AS~$A$ can derive the key on the fly (on the order
of nanoseconds), while entities in AS~$B$ have to request the key (on the order
of milliseconds).

While there exist other systems for symmetric key distribution, they might require additional router state~\cite{Passport} or a trusted authority~\cite{Kerberos}.
In this work, we assume that the DRKey system is used to establish shared keys between ASes.

\section{flyover Reservations}

In this section we present the ideas behind flyover reservations.

\subsection{Definition} \label{sec_flyovers_definition}
In our terminology, a path is composed of multiple hops, where a hop denotes the intra-domain forwarding path between the ingress and egress interface of an AS.
The main observation behind the design of flyover reservations is that bandwidth reservations do not necessarily need to be \emph{path-based}, but that they can instead be \emph{hop-based}.
Thanks to this decomposition of a path into single hops, an on-path AS can perform the admission of a bandwidth reservation for the source AS based solely on local information.
More specifically, the $i$th on-path AS provides the source AS with a reservation starting at its ingress interface $a$ and ending at its egress interface $b$, where the size of the reservation $\beta_i^{~a,b}$ is computed as
\begin{equation}
    \beta_i^{~a,b} = \frac{\text{M}^{a,b}_{i}}{\text{max}(\rho_{i}^{a,b},~\rho_{\text{min}})}. \label{eq_bandwidth}
\end{equation}
Here, $\text{M}^{a,b}_{i}$ is the allocation matrix entry of AS $i$ corresponding to the interface pair $(a, b)$, and $\rho_{i}^{a,b}$ is the number of ASes requesting a reservation over that interface pair.
The constant $\rho_{\text{min}}$ prevents the provisioning of excessive amounts of bandwidth in scenarios where not many ASes request a reservation.
Packets forwarded under such a reservation can use this guaranteed share of bandwidth and are protected from congestion.
By composing flyovers along the chosen path, a source AS can construct a complete end-to-end reservation.
A discussion of alternative (but insufficient) approaches to per-interface-pair reservations can be found in \cref{sec_appendix_alternatives_to_pip_reservations}.
We investigate a demand-aware flyover design---whereby source ASes can request a minimum flyover bandwidth---in \cref{sec_appendix_demand_aware}.

\subsection{Computing \texorpdfstring{$\boldsymbol{\rho}$}{Rho}} \label{sec_finding_rho} The number $\rho$  of ASes
requesting reservations through an AS has a significant influence on the
amount of flyover reservation bandwidth (\cref{eq_bandwidth}). Moreover, a
precise accounting of the number of remote ASes requesting a reservation is
essential to the efficiency and security of flyovers: While an over-estimation
of $\rho$ may lead to an inefficient bandwidth allocation, underestimating $\rho$
can cause the over-allocation of bandwidth, jeopardizing the forwarding
guarantees of the flyover reservations.
This accounting is challenging as the number of participating ASes can
dynamically change, and we do not want to introduce additional
coordination overhead among ASes, nor rely on a globally trusted registry.

We present an algorithm with which border routers can compute a precise estimate of $\rho$ in \cref{sec_appendix_fine_grained}.
Interestingly, we prove in \cref{sec_appendix_proofs} that our algorithm can
always grant a reservation to any requesting AS \emph{within a bounded time
interval} following its first request, without ever causing over-allocation.

\subsection{Composing Flyovers} \label{sec_traffic_control}
A source AS needs to assemble multiple flyover reservations in order to achieve
end-to-end guarantees on an Internet path. This is in contrast to path-based
reservation systems, where the source has a single bandwidth reservation for
each path. 
The \emph{maximum rate} at which the source can send traffic then corresponds
to the minimum reservation bandwidth of all the flyover reservations on the
path. However, this is only the case if reservation paths do not overlap,
otherwise the common flyover reservations need to be shared among the
overlapping paths (see \cref{fig_concurrent}). As an alternative to simply
splitting a flyover reservation among multiple paths, the source AS can use
time-division multiplexing and assign each path an interval during which this
path can fully use the reservation.

The ideal assignment of reservation bandwidth to path depends on the critical
applications running at the source AS, and we therefore do not specify a
concrete algorithm for this problem. In case of multiple critical applications
running inside the source AS, it is the responsibility of the source AS to
share the reservation bandwidth among them. The whole logic for reservation
management at the source AS is abstracted in a dedicated reservation service.

\begin{figure}[t]
    \centering
    \includegraphics[width=0.3\textwidth]{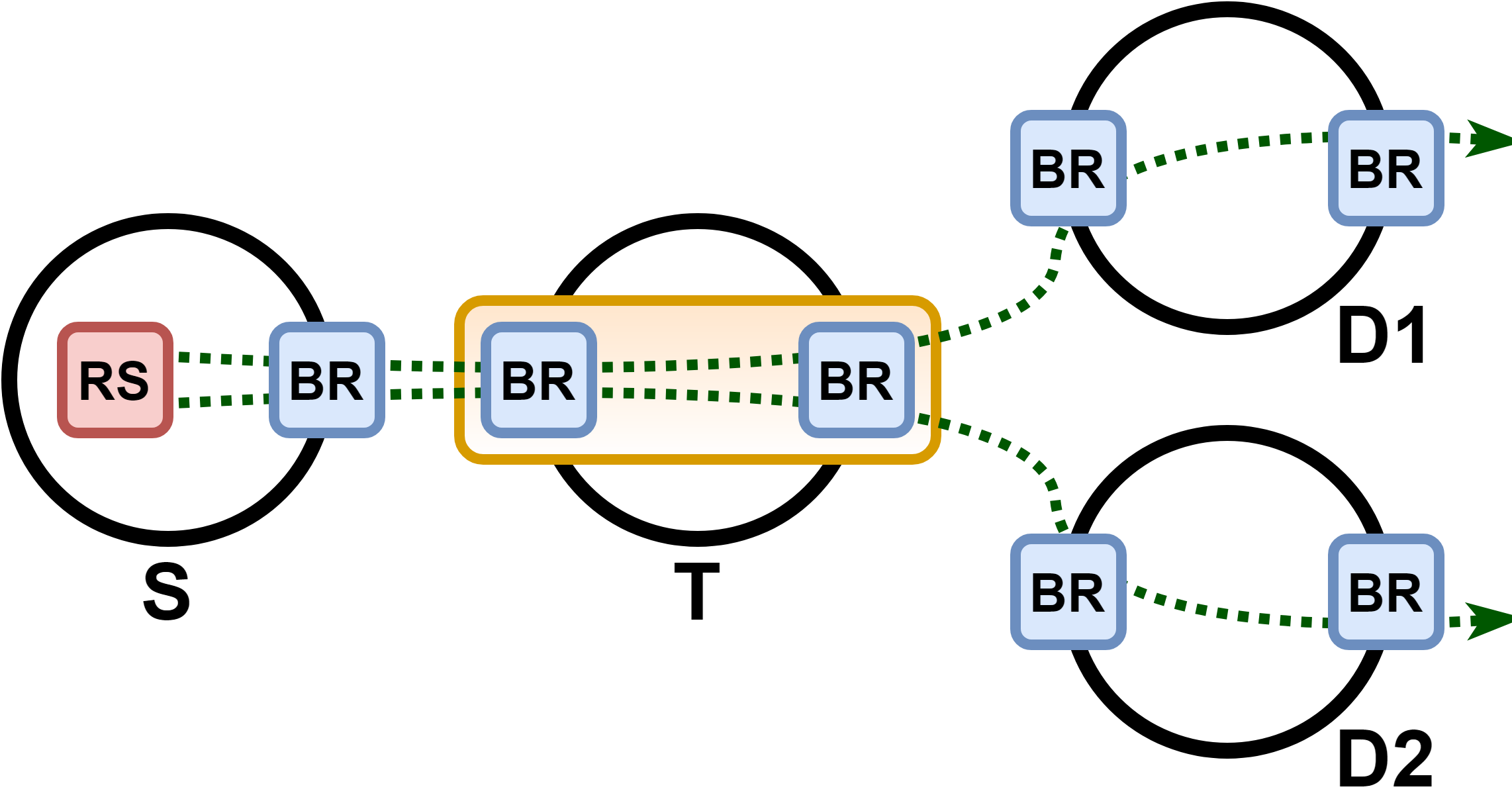}
    \caption{Scenario where traffic control is required by the source AS S. The flyover (orange rectangle) of transit AS T needs to be shared among two traffic paths (green dotted lines) traversing D1 and D2, respectively. A reservation service (RS) inside the source AS manages the reservations.
    }
    \label{fig_concurrent}
\end{figure}

\section{\system{}} \label{sec_msystem}

\subsection{Overview}
To securely instantiate flyovers, we propose \system{}.
We consider a source AS~$S$ requiring high-availability guarantees on a forwarding path composed of at least one other AS.
With \system{}, the source AS can establish flyovers with the on-path ASes, and compose them into a uni- or bidirectional end-to-end bandwidth reservation. 

In the flyover \emph{setup phase}, the source AS issues an authenticated best-effort request, which triggers on-path ASes to compute and allocate the amount of flyover bandwidth available to the source.
At on-path ASes, this admission procedure results in (i) the allocated flyover bandwidth \bw{}, (ii) the  flyover reservation expiration time \tsexp{}, and (iii) a reservation authenticator \auth{}.
The authenticator \auth{} is a cryptographic token representing an authorization for the source AS to send traffic at under a rate of \bw{} immediately until time \tsexp{} over the specific on-path AS' interface-pair.
This triplet is authenticated by the on-path AS and returned to the source. The \auth{} is also encrypted for confidentiality, as anyone holding \auth{} is able to forward reservation traffic on behalf of the source AS~$S$.
As a last step, the on-path AS registers the flyover reservation for future policing at a deterministic traffic-monitoring system, which ensures that the source AS cannot overuse its assigned rate.

In the \emph{data transmission phase}, the source AS uses the authenticators received in the setup phase to compute per-packet cryptographic tokens, called reservation validation fields (RVFs) and backward reservation validation fields (BVFs). These are included in the header of data packets at forwarding time.
RVFs and BVFs are used to efficiently authenticate the source and the reservation to the on-path ASes:
Border routers can verify them without having to store any reservation-specific information, as they can be recomputed ``on the fly'', only based on the information in the packet and an AS-local secret.
Packets that pass this validation check are then fed into the duplicate-suppression system, and are policed by the deterministic traffic monitor.
Successfully validated reservation packets are scheduled to be forwarded with highest priority, which provides increased quality of service.
To renew the reservation, the source AS can send reservation requests over the already established reservation instead of over best-effort, which guarantees successful reservation renewal. 

Henceforth, we write~$\beta_i$ instead of~$\beta_i^{~a,b}$, omitting the interfaces.

\subsection{Network Model} \label{sec_network_model}
\paragraph{Key Distribution}
We require that the source AS~$S$ has already fetched symmetric keys using DRKey from all ASes on a desired communication path, i.e., it is in possession of \drkey{i}{S} for each on-path AS~$i$. This can always be done ahead of time as keys are refreshed only daily.

\paragraph{Path Characteristics}
We assume that paths are stable over time, meaning that paths do not unexpectedly change during the lifetime of a reservation. This is crucial, as otherwise no forwarding guarantees can be achieved. 
Additionally, to support bidirectional reservations---where traffic from destination to source is also protected---we assume that paths are symmetric. 
Bidirectional reservations are only possible if the backward path is the reverse of the forward path.
If forward and backward paths differ, \system{} will create flyover reservations in the backward direction only for the ASes in the overlap between the two paths.
We discuss the applicability of these requirements on paths characteristics to the Internet in \cref{sec_appendix_deployment_flyovers}.

\paragraph{Direct AS Interconnection}
In our model, adjacent ASes connect through direct links.
In case connectivity is provided through lower-layer interconnects at an IXP, we expect the IXP to deploy adequate measures against congestion in its switching fabric.
Alternatively, an IXP may explicitly participate in routing and offer Helia flyovers by registering as an AS.

\paragraph{Time Synchronization}
Lastly, we also require time synchronization on the order of \SI{100}{\ms} between ASes and their components~\cite{Arceo2009,Annessi2017SecureTime,Annessi2017,ptp,rfc5905}, which is necessary for replay suppression and to mark the validity periods of flyover reservations.

\medskip
We note that all these assumptions are in line with the requirements of other secure bandwidth reservation systems (\cref{sec_related}).

\subsection{Reservation Setup} \label{sec_reservation_setup}
In the setup phase, the source AS~$S$ requests authenticators from some or all on-path ASes.

\paragraph{Setup Request}
To this end, the source AS sends a single request packet containing its identifier~$S$, plus a reservation flag~$R_i$ and a backward reservation flag~$B_i$ for each on-path AS for which it wants to setup a flyover reservation. These could be all on-path ASes, or a subset of them, depending on its requirements. $R_i$ indicates to an \system{}-enabled AS whether it is expected to return an authenticator to the source AS in the direction of the request, and the backward reservation flag specifies whether an authenticator for the backward direction is required.
The source adds a current timestamp tsReq and authenticates the request for on-path AS~$i$ using the key \drkey{i}{S}:
\begin{align*}
    \text{Auth}_i = \mac{\drkey{i}{S}}{\text{tsReq}, R_i, B_i}
\end{align*}
The source identifier $S$ is not included in the computation of the MAC, because it is already implicit in the derivation of \drkey{i}{S}.
The setup request hence contains the following fields:
\begin{align*}
    \textsc{SetupReq} &= S, \text{tsReq}, (R_i, B_i, \text{Auth}_i) \\
    \forall i & \in A \subseteq \{1, \dots, \ell\}
\end{align*}
Here, $A$ refers to the subset of on-path ASes from which the source AS wants to request a reservation.
We denote by $A_F \subseteq A$ the set of on-path ASes for which the source AS requested a reservation in the forward direction, and by $A_B \subseteq A$ the set of on-path ASes for which it requested a reservation in the backward direction.
Instead of requesting flyovers from all on-path ASes in one packet, it is also possible for the source AS to send each on-path AS a separate request packet.
In the following, we will distinguish between ingress border routers, which receive a reservation setup packet from outside their AS, and egress border routers, which forward the packet from their AS to a neighboring AS.
Ingress border routers are responsible for the admission of flyovers for the forward direction, while egress border routers admit backward reservations.\footnote{The admission could also be performed in a dedicated reservation service inside the on-path AS. However, we consider admission on the border routers as a distinguished feature of \system{}, as this is not feasible in previously proposed secure reservation systems, which show significantly more computation overhead than \system{}.}

\paragraph{Bandwidth Admission}
One on-path AS after the other handles the request as specified in \cref{alg_admission}: upon receiving the request packet the ingress border router of on-path AS~$i$ performs the bandwidth admission (in case $R_i$ is set).
After checking that tsReq is current, i.e., within a small deviation from the system clock, the border router derives \drkey{i}{S} (\cref{eq_drkey}) and validates the authenticity of the request. If the validation succeeds, the border router computes the bandwidth guarantee \bw{i} it wants to provide to the source AS for the interface-pair (\emph{ing, egr}) that the packet is about to traverse.
Further, it specifies the time when the reservation should expire (\tsexp{}).
The router then derives the authenticator \auth{i}:
\begin{equation}
    \auth{i} = \mac{\key{i}}{S, ing, egr} \label{eq_auth}
\end{equation}
Note that the key for the MAC computation is the secret key \key{i}, only known to AS~$i$.
The border router then performs authenticated encryption with associated data (AEAD) with \auth{i} as the plaintext and (\bw{i}, \tsexp{i}) as the header (which is only authenticated and not encrypted), with the key \drkey{i}{S}.
It then adds the resulting ciphertext and tag, plus \bw{i} and \tsexp{i} to the packet. 
Surprisingly, all these operations can happen in the \dataplane{}---a setup packet is handled by the border router together with all other data traffic.
As a last step, the ingress
border router registers $S$, \bw{i}, and \tsexp{i} at the deterministic traffic monitor.
Similarly, if $B_i$ is set, the egress border router calculates and encrypts \auth{i,B}, where
\begin{equation}
    \auth{i,B} = \mac{\key{i}}{S, egr, ing}, \label{eq_auth_back}
\end{equation}
authenticates it together with \bw{i,B} and \tsexp{i,B}, and adds it to the packet.
If during the admission some verification check fails, the request is not dropped, but forwarded anyways.
This way, ASes later on the path will still receive the reservation request, and can independently perform the bandwidth admission.

\paragraph{Reservation Response}
The corresponding setup response
contains the encrypted authenticators $\auth{i}^{Enc}$, the granted bandwidth $\bw{i}$, and an expiration timestamp \tsexp{i}, as well as Tag$_{i}$ protecting the integrity of those three entries:
\footnote{The reservation responses can also be implemented such that one response packet is sent back per on-path AS, instead of one aggregated response.}
\begin{align*}
    \textsc{SetupResp} &= (\auth{i}^{Enc}, \text{Tag}_{i}, \bw{i}, \tsexp{i}),\\
    &~~~~(\auth{j,B}^{Enc}, \text{Tag}_{j,B}, \bw{j,B}, \tsexp{j,B}) \\
    &~~\forall i \in A_F, \forall j \in A_B
\end{align*}
The entries marked with an additional index $B$ correspond to the backward reservations. Depending on $R_i$ and $B_i$, the first, the second, or both tuples may be present in the setup response.

\paragraph{Reservation Storage}
After the request is processed by the last desired on-path AS, the request is simply sent back to the source AS without further processing by any AS on the return path.
Finally, the reservation service of the source AS verifies the authenticity of the authenticators, the reservation bandwidths, and the expiration times by validating the corresponding response tags. After decrypting the authenticators, it stores all this information for the later use in the transmission phase.

\begin{algorithm}[t]
\DontPrintSemicolon
  \KwInput{\key{i}, $S$, tsReq, $R_i$, $B_i$, Auth$_i$}
  \If{\upshape $!R_i$ or (now() - tsReq) $\notin [-\delta, L+\delta]$}
  {
  Forward packet
  }
  Calc. \drkey{i}{S} \tcp*{\cref{eq_drkey}}
  Calc. Auth$_i$ \\
  \If{\upshape Auth$_i$ is not correct}
  {
  Forward packet
  }
  Pass ($S$, TsReq) to duplicate-suppression system \\
  ($\bw{i}$, \tsexp{i}) $\leftarrow$ getBandwidth() \\
  \If{\upshape ($\bw{i}$, \tsexp{i}) = ($\bot, \bot$)}
  {
  Forward packet
  }
  Calc. authenticator \auth{i} \tcp*{\cref{eq_auth}}
  $\auth{i}^{Enc}, \text{Tag}_i \leftarrow \authenc{\drkey{i}{S}}{\auth{i}, [\bw{i}, \tsexp{i}]}$ \\
  Add $\auth{i}^{Enc}, \text{Tag}_i, \bw{i}, \tsexp{i}$ to the packet \\
  Register ($S, \bw{i}, \tsexp{i}$) at the traffic monitor\\
  Forward packet
\caption{Bandwidth Admission\\(at ingress border router of AS $i$)}
\label{alg_admission}
\end{algorithm}

\subsection{Data Transmission} \label{sec_data traffic}

\paragraph{Reservation Traffic Generation}
Once the source AS obtains the authenticators, its reservation service can use them to send data traffic over the bandwidth reservation (\cref{alg_send}).
To authenticate the traffic origin (itself), the reservation, and the length of the packet to the on-path ASes, the source calculates per-packet MACs:
\begin{equation}
    \varphi_{i} = \mac{\auth{i}}{\tspkt{},~\lenpkt{}} \label{eq_rvf}
\end{equation}
Here, \tspkt{} is a high-precision timestamp to uniquely identify each packet in order to prevent replay-attacks.
In the reservation packet, the source includes a truncated version of $\varphi_{i}$ called reservation validation field (RVF): RVF$_i = \varphi_{i}$[0:\lenvf{}].\footnote{\label{note_lenvf}The notation X[i:j] denotes the substring from byte i (incl.) to byte j (excl.) of X. We assume that the MAC function also provides the properties of a PRF, and suggest a value of \num{3} for \lenvf{}.
The MAC length is chosen to balance security and communication overhead. To brute-force the key or forge a MAC, an adversary must resort to an online brute-force attack, which is easily detectable. Further, a successful attack only allows the adversary to send a single packet over a single hop.}

If the source also wants to support backwards reservations, it additionally computes $\varphi_{i}^{Back}$:
\begin{equation}
    \varphi_{i,B} = \mac{\auth{i,B}}{\tspkt{},~\lenb{}} \label{eq_bvf}
\end{equation}
Similarly to the RVF, only the first \lenvf{} bytes of $\varphi_{i,B}$ are included in the reservation packet, which we denote as backward reservation validation field (BVF): BVF$_i = \varphi_{i,B}$[0:\lenvf{}].\cref{note_lenvf}
The \lenb{} field specifies the maximal length in bytes of such a response packet sent over the backward reservation. Thus, the source AS controls how much backwards traffic a destination is allowed to send. 
This is crucial because without this mechanism to limit return traffic the destination could abuse the backward reservation to cause bandwidth overuse, that would then be attributed \emph{to the source}.
To support a higher backwards packet rate compared to the forwards rate, Helia can add multiple timestamps and BVFs to the forward packet.

Helia therefore provides two complementary mechanisms to protect two-way communication: (i) the use of backward reservations, or (ii) the setup of a separate Helia (forward) reservation from the destination to the source. These mechanisms fit two slightly different use cases. 
Backwards reservations are most useful when the source expects a short reply, and may therefore not want to incur in the delay introduced by the destination having to create a reservation in the opposite direction. 
For high-bandwidth return traffic, however, we expect the destination to set up a dedicated reservation. Thus, the destination can better control the sending rate, and the source can avoid the overhead of sending multiple timestamps and BVFs in each forward packet.

A \system{} reservation packet has the following fields:
\begin{align*}
    \textsc{PktData} &= S, \text{D}, \tspkt{}, \lenb{}, \text{RVF}_i, \text{BVF}_j, \text{P} \\
    &~~~~~~~~~~~~ \forall i \in A_F, \forall j \in A_B
\end{align*}
The D-flag indicates whether the packet is sent in the forward (D=0) or in the backwards direction (D=1).

\paragraph{Reservation Traffic Forwarding}
In the forward direction, the ingress border router of AS~$i$ checks the validity of the RVF as specified in \cref{alg_validation}.
First, the router verifies whether the packet timestamp is current. If it is not, the packet is forwarded as best-effort, and therefore without any delivery guarantees.
Otherwise, the router recomputes \auth{i} based on \cref{eq_auth}, which it further uses to derive the RVF according to \cref{eq_rvf}.
It compares the computed RVF to the one included in the packet.
Again, if they do not match, the packet is forwarded with best-effort.
Finally, a duplicate-suppression system drops the packet in case it was replayed, and the deterministic traffic monitor checks that the packet does not cause an overuse of the flyover reservation.
In the backwards direction the process is very similar, with the difference that the D-flag is now enabled, meaning that it is the egress border router (from the view of the original source AS) that verifies the BVF.
The verification of the BVF is identical to the verification of the RVF, but instead of recomputing \auth{i} and $\varphi_i$, the router derives $\auth{i,B}$ and $\varphi_{i,B}$.
If for some ASes the RVF or BVF is not present or its validation fails, then the border router forwards the packet as best-effort.
The prioritization of successfully validated reservation traffic can be implemented in practice through queuing disciplines such as priority queuing. With such a mechanism, no bandwidth is wasted, as unused reservation capacity is dynamically reallocated to best-effort traffic.

\begin{algorithm}[t]
\DontPrintSemicolon
  \KwInput{$S$, $\auth{i}~(\forall i \in A \subset \{1, \dots, \ell\})$, $\auth{i.B}~(\forall j \in B \subset \{1, \dots, \ell\})$, lenB, P}
  pkt $\leftarrow$ (create new packet with reserved space for header and payload) \\
  pkt-len $\leftarrow$ len(pkt) \\
  tsPkt $\leftarrow$ time() \\
  \For{\upshape $i \in A_F$}
  {
  Calc. $\varphi_i$ \tcp*{\cref{eq_rvf}}
  }
  \For{\upshape $j \in A_B$}
  {
  Calc. $\varphi_{j,B}$ \tcp*{\cref{eq_bvf}}
  }
  Create packet with $S$, $D\leftarrow0$, \tspkt{}, lenB, RVF$_i$ ($\forall i \in A_F$), BVF$_j$ ($\forall j \in A_B$), P \\
  Send packet to first on-path AS

\caption{Sending Reservation Traffic\\(at reservation service of source AS)}
\label{alg_send}
\end{algorithm}

\begin{algorithm}[t]
\DontPrintSemicolon
  \KwInput{\key{i}, pkt}
  \If{\upshape now() - tsPkt $\notin [-\delta, L+\delta]$}
  {
  Forward packet as best-effort
  }
  Calc. \auth{i} \tcp*{\cref{eq_auth}}
  Calc. \rvf{i} \tcp*{\cref{eq_rvf}}
  \If{\upshape \rvf{i}[0:\lenvf{}] != RVF$_i$}
    {
    Forward packet as best-effort
    }
  Pass ($S$, \tspkt{}) to duplicate-suppression system \\
  Pass ($S$, len(pkt), ing, egr) to traffic monitor \\
  Forward packet (high priority)

\caption{Validating Reservation Traffic\\(at ingress border router of AS $i$, in the forward direction, i.e., for $D=0$)}
\label{alg_validation}
\end{algorithm}

\paragraph{Reservation Renewal}
As soon as a flyover reservation is established, the traffic using it is protected until its expiry.
It is then beneficial to renew the reservation, i.e., to send a setup request before the reservation expires, using the (bidirectional) reservation instead of best-effort, which guarantees a successful renewal.
Because an authenticator is neither based on the provided bandwidth nor the expiration time~(see \cref{eq_auth}), the authenticator does not need to be updated when the reservation is renewed.

\subsection{Deployment Considerations}

\paragraph{Allocation Matrix Changes}
So far we assumed a static topology with static allocation matrices. However, link failures, capacity upgrades, and the introduction of new links can alter the network topology. An AS must be able to respond to these events, without breaking the forwarding guarantees during that update.
In case an AS wants to decrease an entry of its allocation matrix, it directly uses the updated value in the computation of future bandwidth reservation sizes. To prevent congestion, the capacity between the two corresponding interfaces is only reduced after at least one reservation validity period. Increasing an entry in the traffic matrix is also possible, and in this case, an AS uses the updated value directly after increasing the network capacity for the corresponding interface-pair. Changes to one allocation matrix entry do not affect the size of reservations passing through other interface-pairs.

\paragraph{Monitoring}
Like other secure bandwidth reservation systems, \system{} requires a duplicate-suppression system plus a bandwidth monitor deployed at every on-path AS to filter out replayed packets and to detect reservation overuse by malicious source ASes.
Because previous systems are path-based, they are forced to rely on a probabilistic bandwidth monitor to police traffic: given the (potentially exponentially) many reservation paths that cross an interface pair, deterministic monitoring would require excessive amounts of memory.
In contrast, flyover providers in \system{} only have to police source ASes, irrespective of which paths the reservations take. 
\system{} can thus use a deterministic monitor based on token-buckets \cite{networkrouting2007} to precisely police the traffic of every AS.
As a token-bucket can be implemented using a single 8-byte timestamp (see \cref{sec_appendix_token_bucket}), even for \num{100000} ASes the required memory adds up to only \SI{800}{kB}.
A malicious AS overusing its flyover reservation can therefore be detected with certainty.

\paragraph{Infrastructure Fault Tolerance}
In \system{}, the setup and renewal of flyover reservations is done directly at the border routers, no communication with the reservation services of the on-path ASes is necessary.
This is in contrast to existing reservation systems, where a setup request not only traverses all border routers, but also the reservation services of the on-path ASes.
Having fewer infrastructure dependencies for the reservation setup reduces the setup time,  the computation and communication overhead, as well as overall protocol complexity.
In case of a border router failure, all communication that needs to pass the affected router, including reservation traffic, gets naturally dropped.
The bandwidth admission and the reservation data traffic forwarding only involve a few simple checks and computations, reducing the potential for implementation errors and failures. 
In case of a failure of the reservation service at the source AS, including the loss of all authenticators, flyovers for a path can be re-established in a single round trip.

\paragraph{Incremental Deployment}
Whereas in path-based reservation protocols every AS on a reservation path needs to support the reservation protocol, \system{} can be deployed incrementally: the reservation setup requests and data plane reservation fields are ignored by ASes not participating in the protocol. Such a selective processing of packets can be implemented through hop-by-hop extension header options, which are readily available in IPv6 and SCION~\cite{ipv6,scionhbh}.
These \emph{partial reservations}, which are only enforced by Helia-enabled ASes, are further discussed in \cref{sec_discussion_emerging_properties}. 
Partial reservations offer increasingly higher protection (compared to best-effort forwarding), until all on-path ASes deploy Helia and full end-to-end forwarding guarantees are achieved.
Finally, an AS only needs to update the set of border routers through which reservations should be supported.
Internal routers do not need to be modified.
\newcommand{\acover}[0]{$\gamma$-cover}
\newcommand{\bwdiv}[0]{concurrent flyover algorithm}
\newcommand{\bwdivlong}[0]{Concurrent flyover algorithm}
\newcommand{\tmdiv}[0]{maximum flyover algorithm}
\newcommand{\tmdivlong}[0]{Maximum flyover algorithm}

\section{Are flyovers large enough?} \label{sec_simulations}
We investigate this question by simulating the end-to-end flyover reservation
sizes on a range of network topologies.
In particular, we show that flyover reservations are large enough to carry
\emph{critical communication traffic}. For comparison, we also compute the size
of GLWP reservations.
(\cref{sec_background_reservation_systems}).

\subsection{Reservation Algorithms}
\paragraph{Flyovers}
Source ASes requesting reservations from remote flyover providers obtain bandwidth shares computed as in \cref{eq_bandwidth}.
They can then freely allocate the flyover bandwidth among protected paths (\cref{sec_traffic_control}).
This decision is rooted in local policies, as the AS could, e.g., decide to grant more of the reserved resources to higher-priority communication paths.
We choose to evaluate two \emph{source allocation strategies} at the extremes of the policy space:
\begin{itemize}
    \item \emph{\bwdivlong{}}: The source AS assigns an equal share of the flyover reservation for an interface pair to each protected path traversing it.
    \item \emph{\tmdivlong{}}: The source AS assigns the full bandwidth of the flyover to one of the protected paths at a time, maximizing the end-to-end reservation size for each path, but preventing the concurrent use of the flyover.
\end{itemize}
The size of the final reservation for a path is given by the minimum of the shares of all on-path flyover reservations.

\paragraph{GLWP}
We also compare against the reservation sizes computed by the algorithm used in GLWP, for a total of three different path reservation algorithms. We do not simulate Colibri's allocations as they depend on the concrete traffic demands in the network, as well as other complex topological features that are outside the scope of our simulation (such as SCION isolation domains~\cite{SCIONBookv2}).

\subsection{Simulation Setup}

\paragraph{Topologies}
We model networks as graphs, with nodes representing ASes in the Internet and unidirectional weighted edges representing inter-AS links. Each node is augmented with external and internal interfaces and an allocation matrix.
We simulate the output of the three reservation strategies on randomly-generated
Barabási–Albert graphs.\footnote{As implemented in the FNSS library~\cite{fnss}. Simulations are run using Snakemake~\cite{snakemake}.} We chose this
particular type for random graphs as it produces scale-free graphs, which are
found to closely approximate the structure of technological
networks~\cite{Broido2019}, the Internet in particular. 

\paragraph{Link Capacity Model}
We assign bandwidth to links in each topology based on a degree-gravity model, for which links connecting nodes with higher degrees receive a higher bandwidth. 
This is usually the case in real networks, where large, central nodes typically meet higher traffic demands with higher-capacity infrastructure.
Link bandwidths are selected from the range \num{40}--\SI{400}{Gbps} in increments of \SI{40}{Gbps}, which are representative of inter-domain link bandwidths.\footnote{These sizes are arguably conservative: Today's Internet undersea cables can carry hundreds of Tbps~\cite{marea}.}
The bandwidth of the internal link---i.e., the link connecting the external interfaces and the internal interface of a node---is set as the maximum capacity of all external links. 

\paragraph{Allocation Matrices}
The allocation matrix $\text{M}_i$ for each node~$i$ is computed starting from the capacities of the links $\text{C}_a$ attached at each interface $a$. 
First, entry $(a, b)$ in the matrix is initialized with the capacity $\text{C}_b$ of the edge attached to the node at interface~$b$. 
Then, each column of the matrix is normalized to the capacity of the outgoing edge, such that $\sum_{a} \text{M}^{a,b}_i = \text{C}_b$ for all interfaces $b$.
Finally, if any of the row sums exceed the capacity of an incoming edge, the relative entries are further normalized such that $\sum_{b} \text{M}^{a,b}_i \leq \text{C}_i$. 

\paragraph{Destination Sampling} 
The flyover bandwidth allocation algorithm (\cref{eq_bandwidth}) takes advantage of the fact that an AS is usually
communicating with only a fraction of the total number of ASes at any given
moment. We therefore run simulations in which each AS communicates only with a
fraction $r$ of ASes. Unfortunately, however, realistic estimates of $r$ are
not readily available. Therefore, we evaluate the reservation algorithms for different values of $r$.
For every value of $r$ and for each source node, we randomly sample
destinations skewing the distribution in favour of nodes with a higher degree:
For every value of $r$ and for each source node, we sample $d=\nint{ r \cdot N}
$ destinations, where the probability of sampling a destination node~$i$ is
$p_i = \text{deg}_i/\sum_j \text{deg}_j$. 

\paragraph{Metrics: Reservation Size \& Cover}
We are interested in showing which of the three algorithms consistently provides the largest reservations.
We first compare the distribution of reservation sizes in each graph.
However, this does not well represent the nature of critical traffic, for which either there is enough reserved bandwidth available, or the reservation is useless.

To capture this binary requirement on critical traffic, the original GLWP paper introduces the notion of \acover{}. Intuitively, the \acover{} of a node is the fraction of other nodes it can reach at any time with a reservation of size at least $\gamma$, which is called the \emph{cover threshold}. 
Formally, given all reservations computed by algorithm $A$ on a graph with nodes $V$, $|V|=N$, the cover for node~$i$ is:
\begin{align}
    \gamma\text{-cover}_i^A = \frac{\sum_{j \in V} I(a_{ij} > \gamma)}{N}
\end{align}
where $I(\cdot)$ is an indicator function, and $a_{ij}$ is the size of the reservation from node~$i$ to node~$j$.
However, flyovers only consider a fraction $r$ of destinations. 
Therefore, we have to adapt the previous definition to consider the set $S_i$ of sampled destination nodes ($|S_i|= \nint{r \cdot N}$) for each node $i$:
$\gamma\text{-cover}_i^A = \frac{\sum_{j \in S_i} I(a_{ij} > \gamma)}{\nint{r \cdot N}}$.
We relax the definition of \acover{} to accept that nodes in the cover may not be reachable simultaneously, which is the case for the \tmdiv{}.

\subsection{Results}

\begin{figure}[t!]
    \centering
    \begin{subfigure}{\linewidth}
        \centering
        \includegraphics[width=\linewidth]{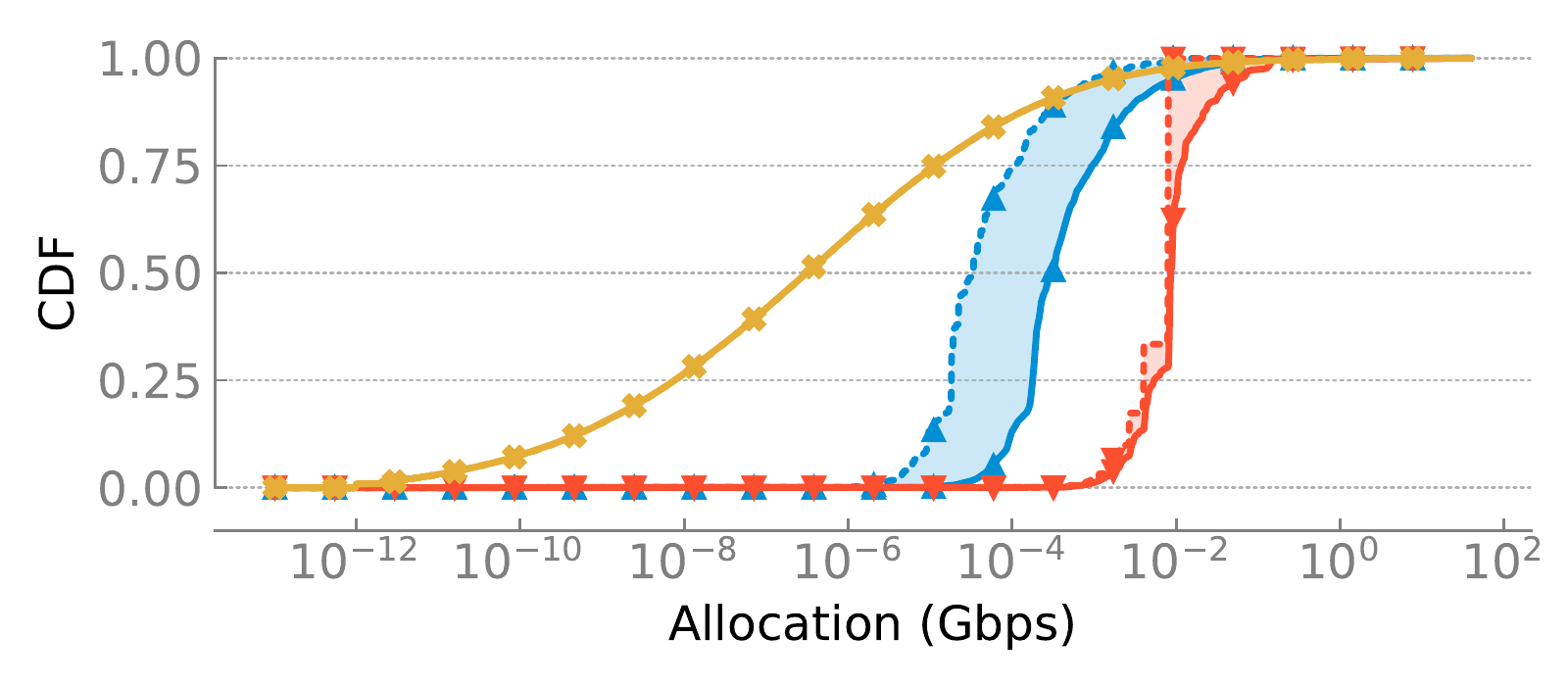}
        \caption{Reservation size distribution.}
        \label{fig:allocation_distribution} 
    \end{subfigure}

    \begin{subfigure}{\linewidth}
        \centering
        \includegraphics[width=\linewidth]{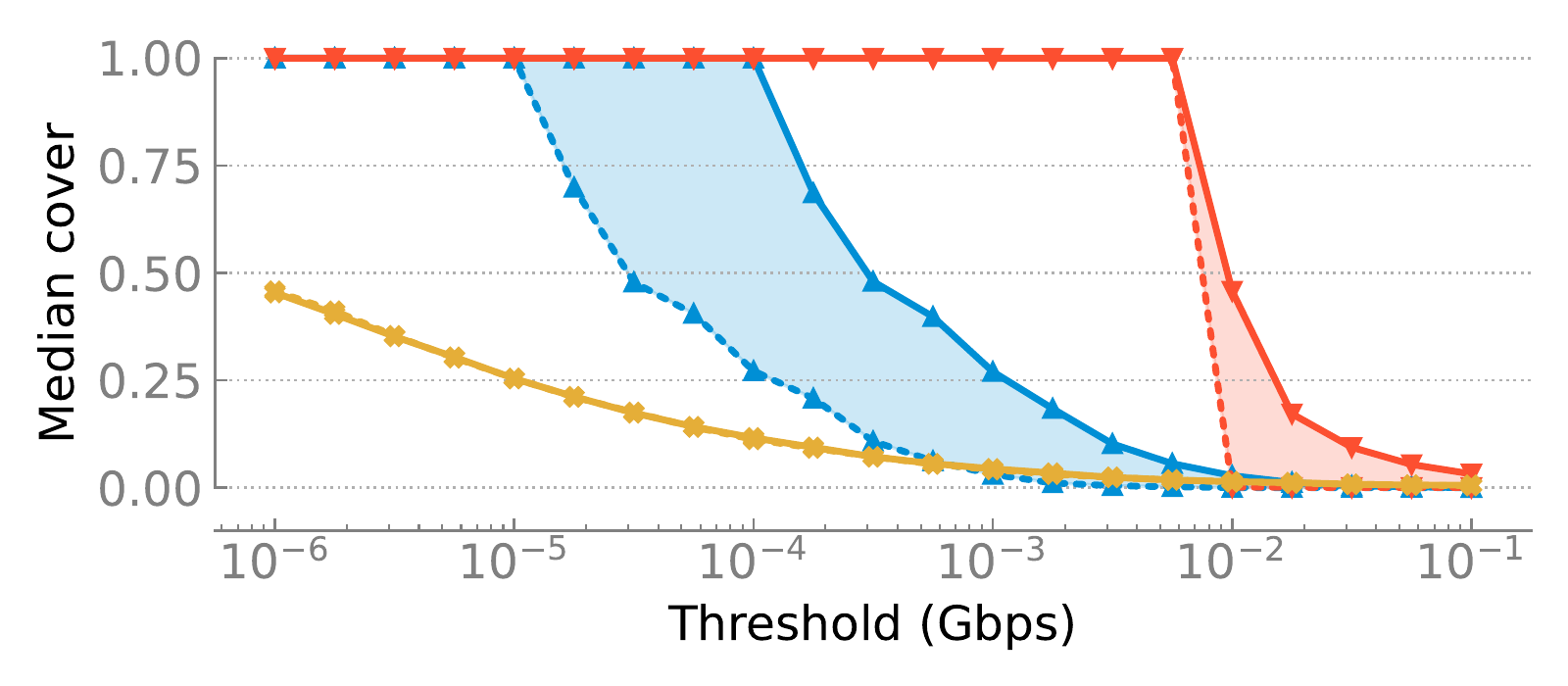}
        \caption{Median \acover{} achieved with increasing cover thresholds.} 
        \label{fig:cover_thrs}
    \end{subfigure}
    
     \begin{subfigure}{\linewidth}
        \centering
        \includegraphics[width=\linewidth]{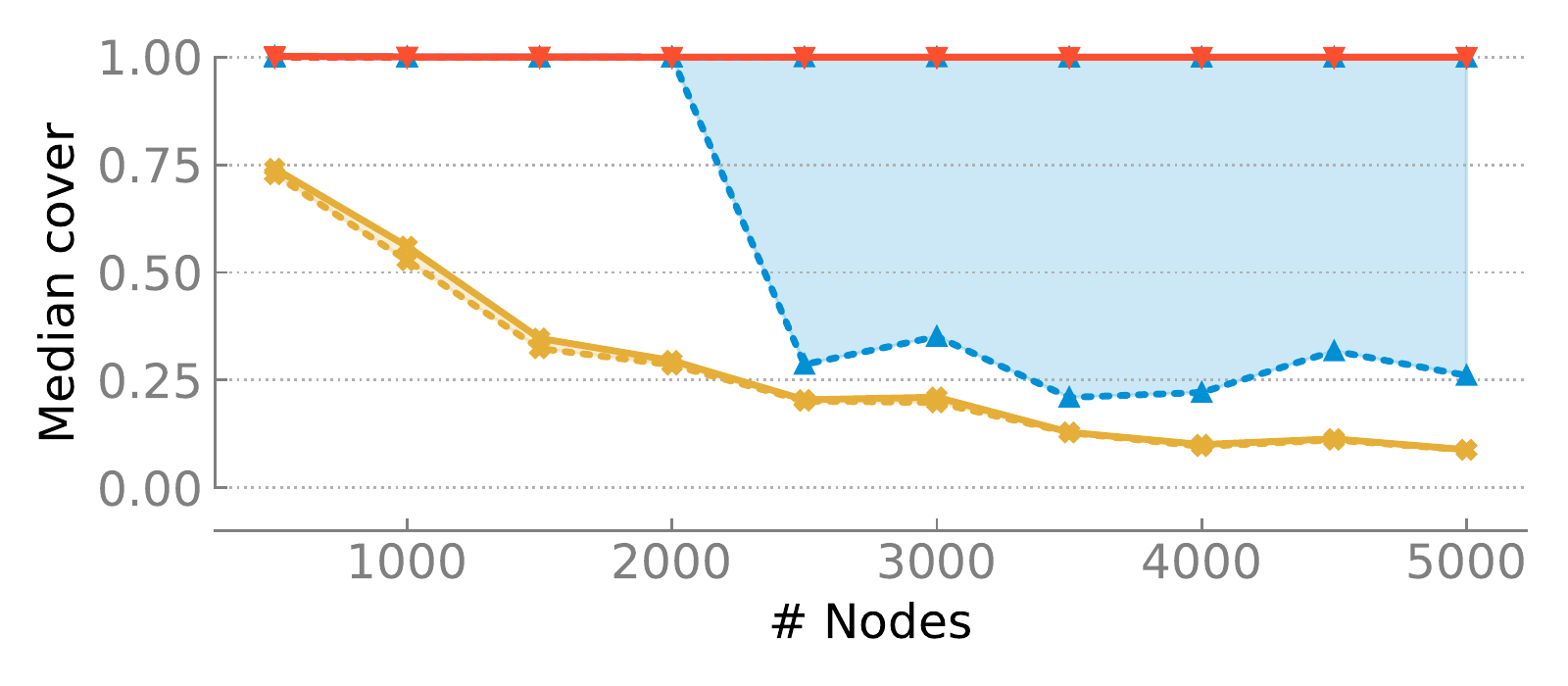}
        \caption{Median 100-kbps cover, varying the size of the graphs.}
        \label{fig:cover_nodes}
    \end{subfigure}   
    
    \begin{subfigure}{\linewidth}
        \centering
        \includegraphics[width=\linewidth]{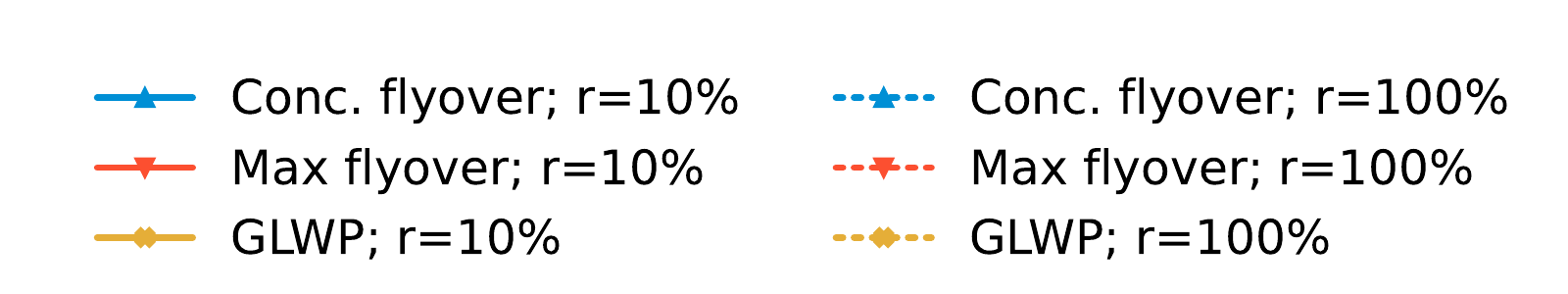}
    \end{subfigure}

    \caption{Simulation results for a 5000-node graph. The shaded areas represents the interval between the 10\% and 100\% destination sampling rate.
    }
    \label{fig:simulation_results}
\end{figure}

We report here the results of three experiments that are illustrative of the differences between the algorithms.

\paragraph{Reservation Size Distribution}
\Cref{fig:allocation_distribution} shows the cumulative distribution function (CDF) of the reservation sizes on a \num{5000}-nodes graph for the three algorithms and varying sampling rates.
Confirming intuition, we notice that lower sampling rates yield higher reservation sizes compared to the higher sampling rates.
We further observe that GLWP reservation sizes are lower, and distributed over a much wider range of values, compared to the reservation sizes produced by assembled flyover reservations.
The median reservation computed with the maximum flyover algorithm is more than \num{10000}$\times$ larger than the median GLWP reservation, which is due to GLWP's reservation bandwidths decreasing exponentially in the path length (\cref{sec_related_GLWP}).
This difference is also highlighted in the next experiment.

\paragraph{Median $\gamma$-Cover}
\Cref{fig:cover_thrs} shows the median \acover{} across all nodes for $\gamma$ varying 
from \SI{1}{kbps} to \SI{100}{Mbps}, computed on the same \num{5000}-node graph.
The \tmdiv{} provides high reservation sizes, and therefore high median cover, while the \bwdiv{} only achieves full cover up to a threshold of \SI{100}{kbps}, and degrades to zero around the \SI{10}{Mbps} threshold. Finally, GLWP's median cover is generally substantially lower, although it degrades less rapidly.

\paragraph{Cover Scalability}
The median \SI{100}{kbps}-cover of the three algorithms, computed on random graphs of increasing size from \num{500} to \num{5000} nodes, is shown in \cref{fig:cover_nodes}. We observe that the \tmdiv{} provides a \SI{100}{\percent} cover with this threshold even at the highest number of nodes. 
The \bwdiv{} has a broad range of median covers, depending on the selected sampling rate. As we also observe in \cref{fig:cover_thrs}, this algorithm provides full cover for $r=\SI{10}{\percent}$, and drops to little more than \SI{20}{\percent} for $r=\SI{100}{\percent}$. 
The sharp drop for $r=\SI{100}{\percent}$ at around \num{2000} nodes is due to the binary nature of the cover threshold: The increased number of communication paths between nodes increases contention, reducing the end-to-end flyover bandwidth for many paths below the threshold. 
This effect is typically more pronounced for longer paths. For shorter paths---connecting nodes at the edge and towards the core---the end-to-end flyover bandwidth remains instead within the cover threshold, as they do not have to compete with many other paths for bandwidth.
Further increasing the size of the graph does not significantly impact the reservation size for these short peripheral paths, leading to fairly static cover for graph sizes above \num{2000} nodes.

\subsection{Discussion} 
From our simulation results we can distill the following general statements.
First, an AS seeking to connect to multiple destinations concurrently using up
to \SI{100}{kbps} of reserved bandwidth can employ the \bwdiv{} version.
Second, for extremely critical communications, an AS can use the \tmdiv{} and
sacrifice simultaneous communication to reach another node in the network with a reservation of at least \SI{10}{Mbps}. Using the \tmdiv{} can thus greatly extend the
range of reserved-bandwidth communication.
In simulated settings, flyover-based algorithms greatly outperform GLWP.

\paragraph{Relevance to the Internet}
The super-quadratic computational complexity of the simulations here presented
limits the size of the graphs we could consider. The graphs used in the
simulations are thus small compared to the size of the Internet---roughly an
order of magnitude smaller.
Nevertheless, these graphs are more challenging for reservation algorithms than
the actual Internet.
The links used in the simulations are conservatively sized when compared to the
multi-Tbps links that compose today's Internet backbone. This additional
available capacity will proportionally increase the size of flyover
reservations. Then, the Internet core is densely interconnected, and pairs of
ASes are often connected by multiple links. This multi-link connectivity is not
reflected in our network models, and further increases the bandwidth available
for reservations. Finally, the number of ASes in the Internet increases
linearly~\cite{cidr} while network capacity grows
exponentially~\cite{routerStatistics}.

\begin{figure}[t!]
  \centering
\begin{subfigure}{\linewidth}
    \centering
  \includegraphics[width=\linewidth]{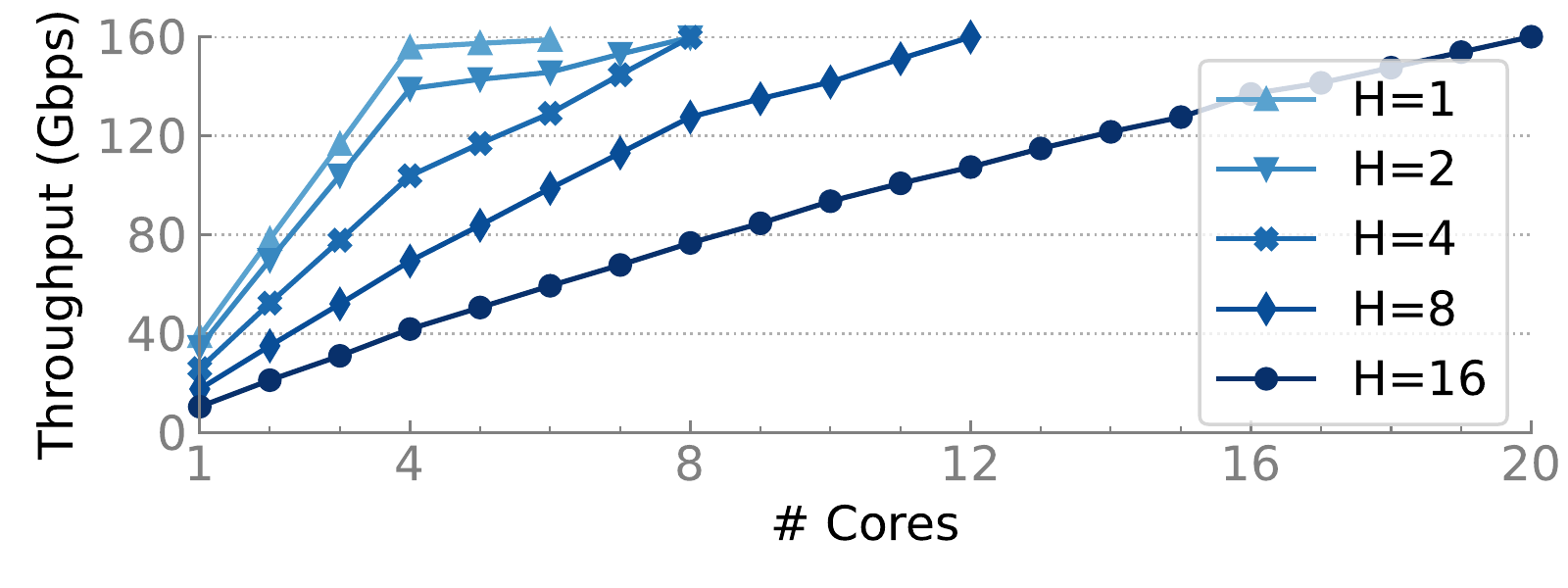}
  \caption{Unidirectional reservations.}
\end{subfigure}
\begin{subfigure}{\linewidth}
    \centering
  \includegraphics[width=\linewidth]{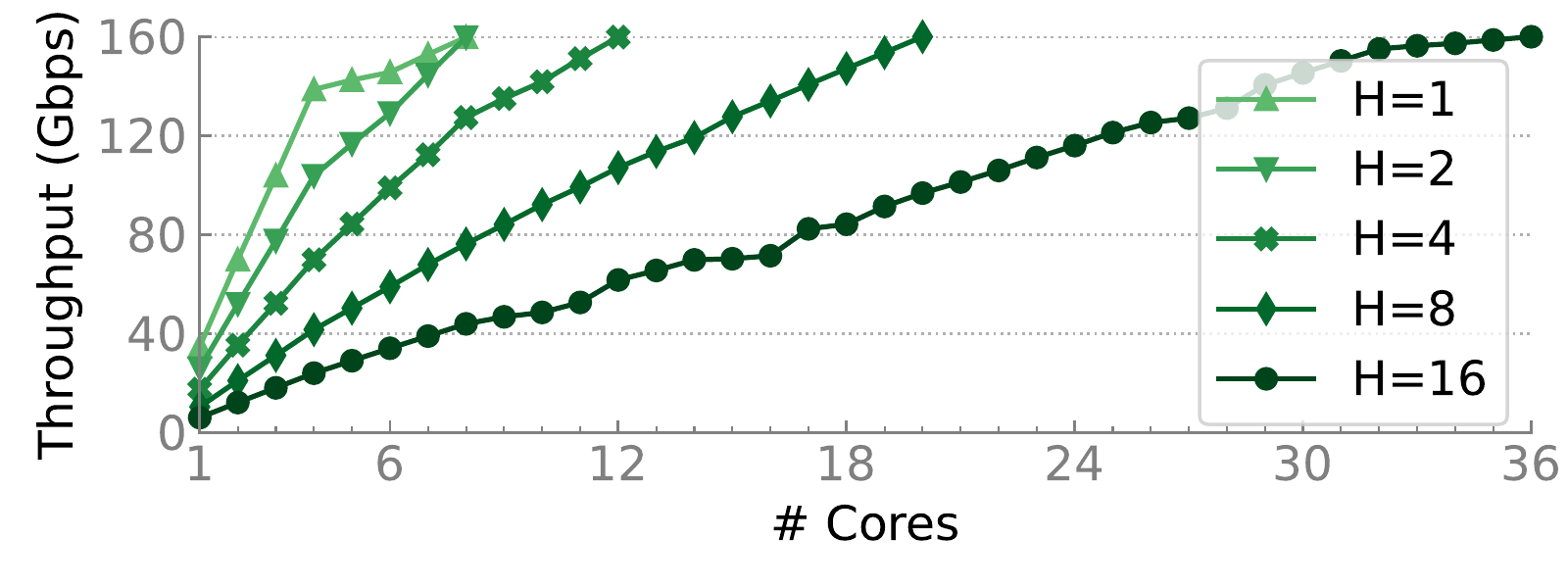}
  \caption{Bidirectional reservations.}
\end{subfigure}
  \caption{\textbf{Evaluation of \cref{alg_send}.} Packet generation performance for traffic in gigabits per second at the reservation service of the source AS for different numbers of cores and AS-level hops (H). The packets carry a payload of \SI{1000}{\byte}.
  }
  \label{fig_source_forward_gbps}
\end{figure}

\begin{figure}[t!]
  \centering
    \includegraphics[width=\linewidth]{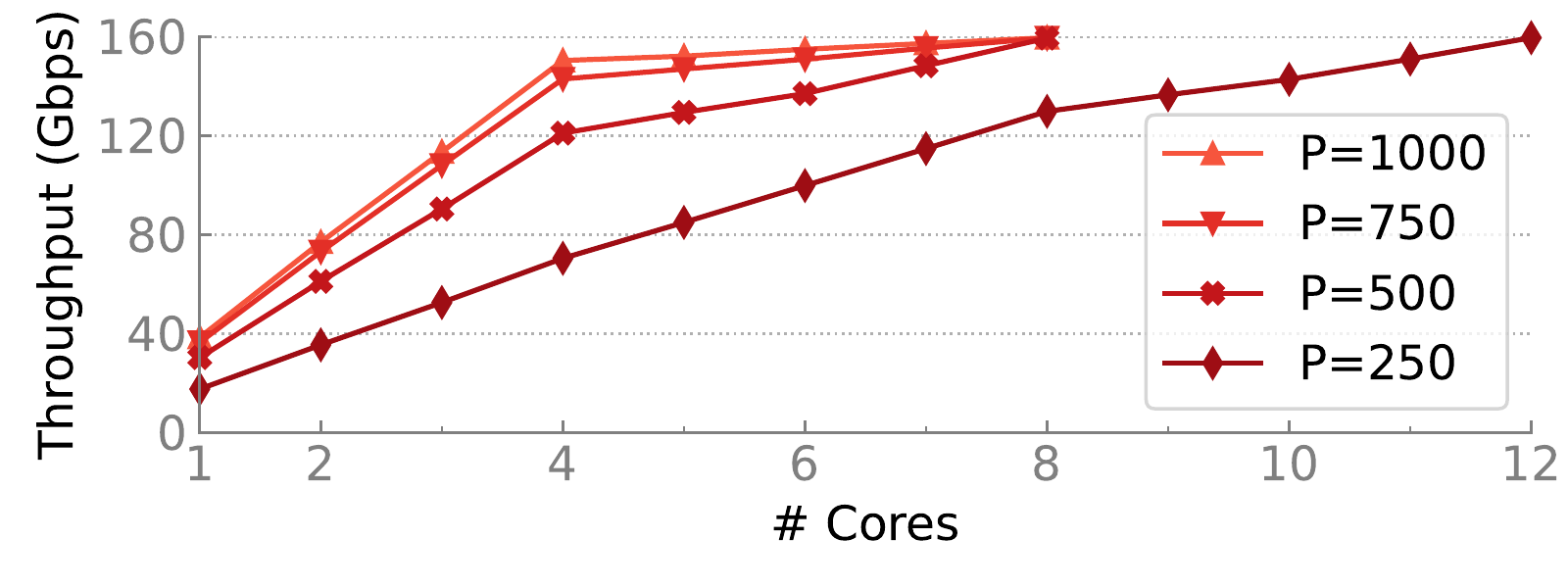}
  \caption{\textbf{Evaluation of \cref{alg_validation}.} Forwarding performance in gigabits per second at an ingress border router for different numbers of cores and payload sizes~(P).}
  \label{fig_router_forward_gbps}
\end{figure}

\section{Implementation and Evaluation}
\label{sec:implementation}
We implement and evaluate the bandwidth admission (\cref{alg_admission}), the reservation traffic generation (\cref{alg_send}), and the packet validation (\cref{alg_validation}) procedures, in order to demonstrate high performance at every component.
As forward and backward traffic validation procedures are identical, we only evaluate the packet validation performance at an ingress border router.
A performance comparison with Colibri and GLWP is provided in \cref{sec_appendix_comparison_performance}.

\subsection{Implementation}
We implement all three algorithms using Intel DPDK~\cite{dpdk}. \footnote{We chose DPDK because (i) it enables rapid prototyping of dataplane protocols; (ii) it is supported by a large range of devices~\cite{dpdk_supported}; and (iii) it is used by various manufacturers of both BGP/IP~\cite{dpdk_ip_border_router,dpdk_ip_border_router2} and SCION border routers~\cite{dpdk_scion_border_router}. Yet, any production-ready implementation of Helia does not necessarily need to rely on DPDK.}
In particular, we leverage the Bloom filter implementation in the DPDK membership library; 
the AEAD is based on the IETF variant of the ChaCha20-Poly1305 construction provided by libsodium~\cite{libsodium};
and we use the AES-128 block cipher in CBC mode for MAC functions. 
To speed up the AES computations, we rely on Intel’s AES-NI~\cite{AESni} hardware instructions available on all modern Intel CPUs.

The bandwidth admission uses the algorithm specified in \cref{sec_appendix_fine_grained} to compute $\rho$.
For deterministic monitoring and policing the reservations we use a hash table, mapping an AS number to its dedicated token bucket. Each token bucket is represented by a single \SI{8}{\byte} timestamp (\cref{sec_appendix_token_bucket}).
We do not explicitly include the path (a list of interface pairs) in the packet, as we assume that ingress border routers know the egress interface to which each \system{} packet must be forwarded, based on the destination address, local policies, and possibly other information in the header.

The duplicate suppression system is directly taken from the literature~\cite{replay2017}, and we do not include it in our evaluation.

\subsection{Testbed}
Our measurement setup consists of two machines, a commodity machine with an Intel Xeon 2.1 GHz CPU, which runs our algorithm implementation that is to be evaluated, and a Spirent SPT-N4U, which serves both as traffic generator (when evaluating the bandwidth admission and the traffic validation) and bandwidth monitor (when evaluating the traffic generation). Both machines are connected by four \SI{40}{Gbps} bidirectional Ethernet links. We evaluate all three implementations separately one after the other, they never run at the same time on the commodity machine.

\subsection{Results}

\paragraph{Bandwidth Admission}
The total duration of the bandwidth admission is \SI{618}{ns} on average.
This measurement also includes the overhead induced by the algorithm to compute $\rho$. In particular, we used \cref{alg_fine_grained} from \cref{sec_appendix_fine_grained} to implement the function \texttt{getBandwidth()}. The overhead caused by this computation amounts to \SI{30}{ns}, four orders of magnitude lower than the admission computation overhead of some of the previously proposed systems (\cref{sec_appendix_comparison_performance}).
In comparison, the most expensive operation is the authenticated encryption with \SI{411}{ns}.
The overall reservation admission overhead in \system{} is similar to the per-packet overhead of other data-plane protocols~\cite{EPIC,PPV}.
The admission process can be parallelized by running it on multiple cores, enabling a border router to handle reservation requests at line rate.

\paragraph{Traffic Generation}
The traffic generation performance of the source AS for unidirectional and bidirectional traffic is shown in \cref{fig_source_forward_gbps}. For packets with a payload of \SI{1000}{\byte} and for a path with eight per-hop reservations, one core of the source AS achieves a throughput of \SI{17.62}{Gbps} for unidirectional and \SI{10.40}{Gbps} for bidirectional reservation traffic.
This is already enough for most ASes and the critical applications they are running, especially when considering that the average path length in the Internet is around five AS-level hops~\cite{Wang2016, Bottger2019}. Still, the throughput can be linearly increased by simply dedicating more cores to the reservation traffic generation.
Because our implementation assigns only a single one of the four ports to each core, the throughput gain is lower for the last four cores added before achieving \SI{160}{Gbps}.

\paragraph{Traffic Validation}
\Cref{fig_router_forward_gbps} shows the throughput achieved by a border router validating and forwarding incoming reservation traffic.
For packets with \SI{1000}{\byte} payload, a single core can process \SI{38.48}{Gbps} of reservation traffic, which is close to the \SI{40}{Gbps} capacity of the Ethernet links connecting our measurement machines (a core is only assigned to a single link port).
With only \num{12} cores the router is able to saturate all links for a total of \SI{160}{Gbps} even in the case of short \SI{250}{\byte} payload packets.
Expressed in packets per second, one core can process \SI{6.45}{Mpps}, which is at least a factor of two improvement over the state of the art~\cite{Colibri,GLWP}.
A fine-grained analysis of the router processing overhead can be found in \cref{table_router_tasks}. All cryptographic operations take between \num{22} and \SI{38}{ns}, the total processing time is \SI{121}{ns}.

\paragraph{Deterministic Monitoring and Policing}
For packets with a payload of \SI{1000}{\byte}, one core of the deterministic bandwidth monitor is able to check and police reservations at a rate of \SI{37.81}{Gbps} even for traffic originating from one million different ASes.
This result can be further improved by using multiple cores that each handle a subset of all source ASes (every core has its own hash table), letting the system scale linearly in the number of cores in use.
The memory overhead is 8 bytes per monitored AS. LOFT~\cite{loft}, the state-of-the-art probabilistic monitor used by Colibri and GLWP, shows similar processing speeds but must keep per-flow state in the slow path.

\begin{table}[t!]
    \centering
    \caption{
        \textbf{Evaluation of \cref{alg_validation}.} Breakdown of the average packet processing time in nanoseconds at the ingress border router of an on-path AS.
    The processing time is independent of the reservation path length and payload size.
    }
    \begin{tabular*}{\linewidth}{l@{\extracolsep{\fill}} c} 
         \toprule
         Task & Avg. proc. time (ns)\\ 
         \midrule
         Verify \system{} packet header format & 24\\ 
         Validate timestamp is current& 7\\
         Calculate $\alpha$ using \cref{eq_auth} & 30\\
         AES key expansion of $\alpha$ & 38\\
         Verify RVF using \cref{eq_rvf} & 22\\ 
         \textbf{Total} & \textbf{121} \\
         \bottomrule
    \end{tabular*}

    \label{table_router_tasks}
\end{table}

\section{Security Analysis}

\subsection{Threat Model}

\paragraph{Security Objectives}
\newcommand{\reqirement}[1]{\textbf{R#1}}
With this analysis, we aim to show that Helia is a viable solution to provide strong traffic delivery guarantees, irrespective of attackers. This goal translates into the following low-level security requirements. An adversary must not be able to:
(\reqirement{1}) trick the system into increasing its allocated bandwidth share, nor into reducing the share of other ASes; (\reqirement{2}) indefinitely prevent ASes from setting up a flyover reservation; (\reqirement{3}) utilize the reservation granted to a different source AS; (\reqirement{4}) drop or significantly delay legitimate reservation traffic; and (\reqirement{5}) use reservations outside their validity period or beyond their maximum bandwidth.

\paragraph{Adversary Model}
An on-path adversary can trivially break our security goals by modifying, delaying, or dropping packets.
Therefore, we restrict the adversary model accordingly.

Helia's security properties hold only for paths consisting
of honest and uncompromised ASes. The adversary can nevertheless observe, inject, and replay packets on the links connecting on-path ASes.
Further, we do not place any restrictions on the capabilities and number of off-path adversaries. An off-path adversary can observe, modify, inject, drop, delay and replay packets, and even compromise entire ASes, including their control services, border- and internal routers, and get access to AS key material.
We finally assume that cryptographic primitives such as MACs and PRFs are secure,
and that the probability of false negatives in the duplicate-suppression system is negligible---as it can be driven arbitrarily low with additional resources.

\subsection{Attacks against Reservation Admission} \label{sec_attacks_admission}
\system{} defends against bogus admission requests by authenticating every packet, and against replayed requests thanks to the duplicate-suppression system. The authenticators are encrypted to protect against potential observers, as source authentication depends on their confidentiality.
In the following, we discuss \system{}'s resistance against further attacks targeting reservation admission.

\paragraph{Repeated Reservation Requests (\reqirement{1})}
An adversary may try to issue multiple authentic reservation requests to obtain multiple reservations for the same flyover, and thus more bandwidth than intended.
However, the authenticators and reservation sizes the source AS receives are the same, irrespective of the number of requests the source sends (\cref{eq_auth}).
Therefore, the bandwidth monitor of the provider AS only attributes reservation data traffic to one single reservation, preventing this attack.

\paragraph{Sybil Attacks (\reqirement{1})}
By registering a large number of ASes, an adversary can try to get a higher aggregate of \system{} bandwidth, while at the same time reducing the reservation size for honest ASes.
However, such an attack is cumbersome and likely to be detected
because AS numbers must be requested from regional Internet registries through a manual procedure~\cite{rfc1930,ripe-asn-request}
Furthermore, an adversary would need to set up the necessary components and keys to support BGP and RPKI.
If \system{} is used in a future Internet architecture such as SCION,
the computation of the reservation can be adapted to a per-ISD/per-AS division of the bandwidth, i.e., an allocation matrix entry is first divided by the number of ISDs, and only then on a per-AS basis.

Another mitigation against sybil attacks is their cost, as reservations are expected to be granted only in return for payment (\cref{sec_billing}).
Finally, ASes can define ``allow lists'' for their flyovers by defining access policies on which source ASes are allowed to obtain the symmetric keys necessary to issue a reservation request.

\paragraph{Bandwidth Admission Request Flooding (\reqirement{2})}
Computationally expensive services accessible by an unpredictable number of potentially untrusted sources need to be protected against flooding attacks, where one or multiple adversaries try to exhaust the computational resources of the service, with the objective of rendering the service unavailable for honest users.
Such flooding attacks do not have any impact on legitimate requests, regardless of whether the adversaries' requests are authentic or not.
In fact, as shown in \cref{sec:implementation}, validating and admitting a reservation request in \system{} is highly efficient, and packets can be processed at line rate. Therefore, no honest request is ever discarded.

\paragraph{DoC Attacks (\reqirement{2})}
In a denial-of-capability (DoC) attack the network is flooded with excessive amounts of traffic in order to prevent one or multiple source ASes from obtaining an authenticator.
As flyover reservations can be renewed over existing reservation traffic, and because reservation traffic can not be disturbed by any DDoS attempts, only the very first reservation request sent over best-effort traffic can actually be attacked.
Moreover, an adversary can try to prevent the source AS from fetching symmetric keys during the execution of the DRKey protocol. Because those keys can be requested already in advance, an adversary needs to successfully attack the exchange channel for a considerably long period of time.
Finally, the Docile~\cite{docile} system may be used to protect both the key exchanges and the initial reservation setup request without introducing any other avenues for DoC attacks.

\subsection{Attacks against Reservation Traffic}

\paragraph{Framing Attacks (\reqirement{3})}
With spoofed source AS identifiers in the reservation packets, an adversary can try to frame a honest source AS by causing bandwidth overuse at provider ASes, which could possibly result in sanctions against the source AS.
Another implication of a successful framing attack is the rejection of a portion of the reservation traffic at provider ASes that already incorporated the spoofed packets in the bandwidth monitoring.
However, \system{} prevents spoofing of source AS identifiers in the reservation data packets by means of per-packet source authentication.
An authenticator \auth{} used in the computation of an RVF or BVF is only known by one specific and legitimate source AS, where encryption ensures confidentiality of the authenticators in the reservation setup.

\paragraph{DDoS against QoS (\reqirement{4})}
As soon as an end-to-end reservation is established, packets sent over this channel profit from forwarding guarantees and enhanced quality of service (QoS).
To perturb traffic in this reservation channel, an off-path adversary can attempt to cause congestion at border routers by sending large volumes of attack traffic, which could lead to legitimate reservation packets being delayed or dropped.
However, volumetric attacks based on best-effort traffic have no impact, as border routers forward reservation traffic with strictly higher priority.
Similarly, attacks based on legitimate reservation traffic can also not impact other reservations in the same AS:
All reservations are policed so that they can use at most the portion of bandwidth allocated for them, while definitions of the flyover computation and the allocation matrix ensure that the sum of all reservation sizes never results in an over-allocation of any intra- or inter-AS link.

\paragraph{Reservation Overuse (\reqirement{5})}
Bandwidth overuse by a malicious or misconfigured source AS is detected and policed by deterministic bandwidth monitoring systems deployed at every on-path AS.

Tricking the monitoring system by reusing the same packet header, in particular with the same timestamp and the same RVFs, for multiple data packets with different payloads, is not possible.
The bandwidth monitor receives packets only after they have been vetted by the duplicate-suppression system, which discards packets with a combination of source AS identifier and timestamp it has already seen.
It is up to the AS observing overuse attempts to decide whether to notify or even sanction the misbehaving source AS, for example by denying future reservations.

\paragraph{Sending beyond Expiration (\reqirement{5})}
Flyover authenticators are valid until the underlying key established by the DRKey system is rolled over (\cref{eq_auth,eq_auth_back}).
Nevertheless, a malicious or misconfigured source AS still cannot send valid reservation traffic during periods for which no reservation was requested, e.g., after a reservation expired and was not actively renewed.
The reason is that, while the source AS can still generate packets with valid RVFs and BFVs, the deterministic bandwidth monitor is aware of the exact expiration times. Packets using expired reservations are then forwarded as best-effort traffic, i.e., without any delivery guarantees.

\section{Discussion}

\subsection{Emerging Properties} \label{sec_discussion_emerging_properties}
We highlight two novel use-cases of \system{}'s reservations, enabled by the simplicity and composability of flyovers.

\paragraph{Partial Reservations}
Instead of requesting multiple flyover reservations using a single setup packet, an AS can alternatively send a setup packet for every on-path AS independently.
Taking this observation one step further, the AS can also use flyover reservations to only protect selected parts of the whole forwarding path, instead of assembling flyover reservations to a full end-to-end reservation.
On-path ASes for which the source AS did not add RVFs to the data packets simply forward the traffic as best-effort.
Although no end-to-end guarantees are achieved this way, such \emph{partial reservations} are useful to bypass ASes that do not participate in \system{}, or to save costs by only protecting traffic on congested links.

\paragraph{Self-Renewing Reservations}
In \system{}, a reservation setup request only contains the authenticated source AS identifier and the reservation directionality flags.
This information is also present in the reservation data packets, and therefore this type of traffic can in principle also serve as a means to renew a reservation (by computing the allowed bandwidth and expiration time, and updating the bandwidth monitor accordingly), even though those packets do not contain an explicit setup request.
This idea leads to the concept of \emph{self-renewing reservations}, i.e., reservations that are not renewed through explicit requests, but implicitly by the source AS sending traffic over the reservation.
Such implicit renewals do not trigger a response packet, and therefore a reservation renewal request is still necessary in case the source AS wants to learn the precise reservation bandwidth.
The reservation expires when the source AS stops generating reservation traffic for a certain time period.
Self-renewing reservations are enabled by \system{}'s ability to efficiently admit bandwidth reservations at border routers.

\subsection{Reservation Billing} \label{sec_billing}
Billing of inter-domain services is a complex topic, and an in-depth analysis is outside the scope of the paper. Nevertheless, we briefly sketch a promising approach that enables simple payments between ASes, and thus supports novel business models based on flyover offerings.
To this end, we first observe that a source should be billed a flat rate for a flyover reservation---regardless of the resulting reservation size---because (i) sources have no control over the resulting amount of bandwidth, and (ii) the additional cost of forwarding flyover- instead best-effort traffic is small, and anyways marginal compared to the overall fixed infrastructure costs~\cite{bw_economics}.
Then, billing the reservation admission can be reduced to billing for the key exchange, as the key is not only needed to secure the reservation setup, but is also necessary to issue reservation requests in the first place. 
Further, the key exchange happens in the control plane, which is not particularly time critical, allowing for additional billing operations.

\section{Related Work} \label{sec_related}

\subsection{GLWP} \label{sec_related_GLWP}
The GMA-based light-weight communication protocol (GLWP)
enables the establishment of inter-domain bandwidth reservations without
on-path ASes having to keep any per-reservation state at their reservation
services. GLWP achieves this by computing a bandwidth reservation for some path
solely based on the traffic matrices of the on-path ASes. The GMA~\cite{GMA}
algorithm, which forms the theoretical foundation behind this computation,
ensures that despite every guarantee being greater than zero, no
over-allocation will ever occur. Intuitively, GMA achieves this property by assigning larger
bandwidths to shorter paths, and (exponentially) smaller bandwidths to longer
paths. GLWP can also support traffic matrices that change over time, without
violating the no-overuse property.
Not keeping state about the per-path reservations comes at a cost, however.
GLWP requires mutual authentication between every pair of on-path ASes to
protect against adversaries on the links that may tamper with the request packet
in order to get a larger reservation or to cause over-allocation. This
quadratic number of authentications adds a large amount of additional state to
the reservation services. 
Further, upon receipt of a reservation request, a GLWP reservation service
needs to verify message authentication codes of all on-path ASes. 
Finally, GLWP assumes a weaker threat model, where no two colluding malicious ASes are neighbors.

\subsection{Colibri} \label{sec_related_Colibri}
Based on the SCION~\cite{SCIONBookv2} network architecture, Colibri establishes inter-domain reservations by assembling multiple shorter reservations, called path segments.
Colibri establishes an end-to-end reservation (EER) over up to three path-segment reservations (SegRs).
In contrast to GLWP, Colibri needs to keep state at each AS about all granted SegRs and EERs in order to keep track of the remaining bandwidth for future reservations. In case of insufficient bandwidth, a Colibri reservation request fails.
Due to its complexity, the algorithm that determines the size of the bandwidth to be admitted for each reservation request is specified and analyzed separately~\cite{ntube}.
A detailed comparison of GLWP and Colibri to \system{} is provided in \cref{sec_appendix_comparison}.

\subsection{Other Bandwidth Reservation Systems} \label{sec_other_reservation_systems}
CoDef~\cite{CoDef} uses a reactive mechanism requesting source ASes to reduce their bandwidth when congestion occurs, which is insufficient to provide strict communication guarantees.
Portcullis~\cite{Portcullis2007} protects packets based on a proof-of-work scheme. While this approach is computationally expensive for normal data traffic, it might be suitable for certain low-rate critical applications. However, Portcullis can only provide probabilistic forwarding guarantees.
The recursive congestion shares (RCS)~\cite{Shenker2020} architecture also provides improved delivery guarantees in the Internet. Nonetheless, the end-to-end resulting allocations are exponentially decreasing in the path length.
STRIDE~\cite{STRIDE} only protects parts of the reservation path, assuming that no congestion arises in the unprotected sub-path.
To the best of our knowledge, only SIBRA~\cite{Basescu2016SIBRA}, GLWP~\cite{GLWP}, and Colibri~\cite{Colibri} satisfy our requirements to a \emph{secure inter-domain} bandwidth reservation system.
As Colibri is the successor of SIBRA, we omit a discussion of SIBRA in this section.

\section{Conclusions \& Future Work}

The public Internet lacks means to protect critical application traffic in a secure and scalable manner.
We address this issue through flyovers: exclusive bandwidth reservations that cross an autonomous system (AS) from ingress to egress border router.
By assembling multiple flyovers, source ASes can create end-to-end protected paths to provide strong delivery guarantees for their critical traffic.
Our simulations based on a topology of \num{5000} ASes show that even at times of maximum demand, every AS can communicate with any other AS using reserved-bandwidth tunnels of over \SI{10}{Mbps}.

We concretely instantiate flyovers in \system{}, a system to protect the flyover setup and forwarding from request flooding and Sybil attacks, and prevent adversaries from overusing their reservation or framing honest ASes.
Our prototype authenticates and forwards reservation traffic at \SI{150}{Gbps} using four cores
on commodity hardware, where a single core achieves a forwarding rate of \SI{6.45}{Mpps}. Even reservation requests can be handled at line rate on the fast path. 
Beyond these performance improvements, the innovative use of flyover reservations enables many advancements over the state of the art, including precise reservation monitoring and policing, flexible traffic control at the reservation source, and an incremental path to deployment for \system{}. 

With this paper we present a step forward in the direction of secure and highly reliable traffic forwarding in the Internet. 
Albeit our prototype is already viable, we see opportunities for future work, such as fleshing out the economic aspects of flyovers or specifying the forwarding behaviour at the source.

\begin{acks}
We would like to thank Carlo Saladin for the interesting discussions regarding the concept of flyover reservations; Patrick Gruntz, for contributing to the correctness proofs; Markus Legner, for his early feedback and suggestions on the system design; Elham Ehsani Moghadam, Simon Scherrer, and Juan Angel García-Pardo for the interesting discussions; and the anonymous reviewers for their helpful feedback.
We gratefully acknowledge support from ETH Zurich, the Zurich Information Security and Privacy Center (ZISC), and Armasuisse.
\end{acks}

\bibliographystyle{IEEEtran}
\bibliography{bib/ref}
\appendix

\section{Other Reservation Computations}\label{sec_appendix_alternatives_to_pip_reservations}
At first glance, two alternative approaches to per--interface-pair reservations suggest themselves:
per--ingress-interface reservations, where the bandwidth of the ingress link is shared ($\sum_j \matentry{k}{i}{j}$), and per--egress-interface reservations, where the bandwidth of the egress link ($\sum_i \matentry{k}{i}{j}$) is shared among the requesting ASes.
While those approaches increase the flexibility of a source AS, they can not provide any communication guarantees.
With per--ingress-interface reservations, if many ASes decide to send traffic over multiple ingress interfaces towards the same egress interface of a shared provider AS, congestion can arise and packets will unavoidably be dropped. This becomes particularly evident in case the egress link has low capacity.
The same reasoning applies to per--egress-interface reservations, where congestion occurs for example if there is an egress interface with a high capacity link and an ingress interface with comparably low capacity.
Also the combination of per-ingress- and per--egress-interface reservations is insufficient, as traffic may already get dropped in the intra-AS network.

\section{Supported Network Architectures} \label{sec_appendix_deployment_flyovers}
In our network model (\cref{sec_network_model}), we assume that the Internet architecture provides paths that are stable over the reservation validity periods.
Without path stability, flyover reservations would be invalidated by path changes, therefore voiding the possibility of achieving delivery guarantees.
Path stability is also explicitly required by other bandwidth reservation systems (\cref{sec_related}).
Further, to make use of \system{}'s bidirectional reservations, we also need to assume that paths are symmetric, i.e., that packets between some source and destination traverse the same sequence of interfaces in both directions.
In the following, we discuss path stability and path symmetry in the current Internet as well as in next-generation network architectures.

\paragraph{BGP}
In the current Internet, based on the Border Gateway Protocol (BGP), path stability can generally not be guaranteed.
While paths have been shown to be stable on the order of days~\cite{path_stability}, there is no absolute guarantee that paths do not change unpredictably: routing events can trigger network re-convergence and hijacking attacks can redirect traffic to adversary-controlled paths.
Moreover, the majority of BGP paths are asymmetric~\cite{path_symmetry}.

Still, there are several means to achieve path stability and path symmetry in the current Internet.
For instance, the ARROW system~\cite{arrow}---originally designed to protect against routing attacks---achieves these properties by having ISPs offer tunneled transit through their networks. Customers can then buy and combine such transits from one or multiple ISPs.
As a more general framework, Platypus~\cite{platypus2004,platypus2009} achieves path stability through source routing based on capabilities, where end users can select the paths their traffic takes. Path symmetry in Platypus is implemented by means of reply capabilities.

\paragraph{Next-Generation Architectures}
To overcome BGP's shortcomings, several next-generation Internet architectures have been proposed, many of them providing path stability and path symmetry by design.
SCION~\cite{SCIONBookv2}, for example, achieves path stability by decoupling routing decisions from the dissemination of path information, thus avoiding any path convergence issues. SCION paths are encoded in form of packet-carried forwarding state (PCFS) at the level of AS interfaces, providing optimal conditions for \system{}'s deployment. Paths are valid on the order of several hours and are guaranteed not to change during that period, making them resistant to hijacking attempts.
NEBULA~\cite{NEBULA} and XIA~\cite{XIA} are further examples of network architectures providing path stability.

\section{Computing the Number of Requesting ASes} \label{sec_appendix_fine_grained}
We provide an algorithm for the flyover size computation that estimates $\rho$, i.e., the number of requesting ASes, on each border router.
The algorithm consists of two subroutines: \cref{alg_fine_grained} calculates the reservation size and is hence executed at the border router every time it receives a reservation request, while \cref{alg_update} is called periodically at intervals of~$\epsilon$ and serves the purpose of updating $\rho$ based on the demand observed in the near past.

To keep track of the requesting ASes, our algorithm needs three Bloom filters: \bloom{P}, \bloom{C}, and \bloom{CC}. \bloom{P} contains the identifiers of all ASes that can successfully be provided with a flyover reservation in the current interval. \bloom{C} collects the ASes that issued a request in the current interval, and \bloom{CC} is equal to the state of \bloom{C} at the end of the previous interval.
Additionally, the variable $\rho$ is used to compute the reservation size for requests arriving in the current interval and $\rho_{min} \geq 1$ denotes the minimum value that $\rho$ can assume.
Lastly, $\omega (0 < \omega \leq 1)$ denotes the maximum portion of allocation matrix bandwidth that is allowed to be allocated by the algorithm.

In \cref{alg_update}, when a new interval starts, the variables and bloom filters are updated:
$\rho$ is assigned the number of ASes that issued a request in the last interval ($|\bloom{C}|$) or in the penultimate interval ($|\bloom{CC}|$). The algorithm ensures that $\rho$ is always at least as large as $\rho_{min}$.
In the following step, the Bloom filters are swapped: \bloom{CC} becomes \bloom{P}, \bloom{C} becomes \bloom{CC}, and \bloom{P} is reset (all ASes are removed, i.e., the Bloom filter bits set to zero) and becomes \bloom{C}.

In \cref{alg_fine_grained}, upon arrival of a reservation request, the requesting AS~$S$ is added to \bloom{C}. If AS~$S$ is already in \bloom{P}, it receives a flyover reservation of size $\frac{\omega\cdot\matentry{k}{i}{j}}{\rho}$, that will expire after an interval of $\epsilon$.
If AS~$S$ is not in \bloom{P}, the reservation request fails.
In this case, AS~$S$ can still issue requests in the next intervals.
As shown in \cref{sec_appendix_proofs}, at most $2\epsilon$ after its first request, AS~$S$ is guaranteed to receive a flyover reservation.
Because the algorithm provides at most $\omega\cdot\matentry{k}{i}{j}$ bandwidth to all requesting ASes together, there is $(1-\omega)\cdot\matentry{k}{i}{j}$ bandwidth left.
The border router can divide this leftover bandwidth among ASes for which the bandwidth admission in \cref{alg_fine_grained} failed, so that they still get a reservation already at their first request. For instance, the border router can keep $\theta$ fixed slots with bandwidth $\frac{(1-\omega)\cdot\matentry{k}{i}{j}}{\theta}$, and hence in every interval $\epsilon$ provide up to $\theta$ ASes with such tentative reservations.

This algorithm can be deployed either once on an ingress border router, or multiple times for connections to different egress interfaces. While the first approach is more frugal in terms of memory overhead, the latter approach allows for more fine-grained estimates of $\rho$ and can hence provide higher reservation sizes.
Also, Bloom filter sizes can be chosen on each router separately based on the topology and historical data.
For example, if a border router expects \num{10000} ASes, it could choose \num{7} hash functions for the Bloom filter; to achieve a false positive probability of \num{1} percent, the memory overhead required by the Bloom filter would then be less than \num{12}{KB}~\cite{bloomfilter}.
Furthermore, we suggest to choose $\epsilon$ in the order of ten seconds, and to set $\omega$ around \num{1.2}, meaning that apart from the $|\bloom{P}|$ registered ASes, additional $0.2\times |\bloom{P}|$ unregistered ASes can immediately get a flyover reservation on their first request.
The update times, i.e., the times at which a new interval starts, do not need to be synchronized between different border routers or ASes.

\begin{algorithm}[t]
\DontPrintSemicolon
  \KwInput{\bloom{P}, \bloom{C}, $\rho$, $\omega$, $S$, $i$, $j$}

  \KwOutput{$\bw{}$, \tsexp{}}

  $\bloom{C} \leftarrow \bloom{C}~\cup~\{S\}$ \\
  \If{$S \in \bloom{P}$}
  {
    \Return{$\frac{\omega\cdot\matentry{k}{i}{j}}{\rho},~\text{now()}+\epsilon$}
  }
  \Else
  {
    \Return{$\bot, \bot$}
  }

\caption{Fine-grained bandwidth admission. Executed upon arrival of a reservation request at AS~k.}
\label{alg_fine_grained}
\end{algorithm}

\begin{algorithm}[t]
\DontPrintSemicolon

  \KwInput{\bloom{P}, \bloom{C}, \bloom{CC}, $\rho$}
  
  $\rho \leftarrow \text{max}(|\bloom{C}\cup\bloom{CC}|,~\rho_{min})$ \\
  $(\bloom{P},~\bloom{CC},~\bloom{C}) \leftarrow (\bloom{CC},~\bloom{C},~\bloom{P}.\text{Reset()})$

\caption{Update procedure. Executed after a time~$\epsilon$ following the last execution.}
\label{alg_update}
\end{algorithm}

\section{Correctness proofs} \label{sec_appendix_proofs}
\allowdisplaybreaks

\begin{theorem} \label{th_two_epsilon}
At most $2\epsilon$ after the first request, the requesting AS will successfully be granted a flyover reservation.
\end{theorem}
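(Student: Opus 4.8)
The plan is to follow the requesting AS $S$ through the three-way rotation of Bloom filters that \cref{alg_update} performs, and then close with a one-line timing argument. First I would fix notation: let the border router's executions of \cref{alg_update} partition time into consecutive intervals of length $\epsilon$, write the $m$-th interval as $[\tau_m,\tau_{m+1})$ with $\tau_{m+1}-\tau_m=\epsilon$, and note that \cref{alg_fine_grained} only ever inserts into $\bloom{C}$, so within a single interval $\bloom{P}$ and $\bloom{CC}$ are unchanged while $\bloom{C}$ only grows. I would also record the standing assumption that Bloom filters have no false negatives (already made in the paper): once $S$ has been inserted into a physical filter, every membership test on that filter returns true until the filter is reset.

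The heart of the argument is following $S$ through the rotation. Reading the simultaneous assignment $(\bloom{P},\bloom{CC},\bloom{C})\leftarrow(\bloom{CC},\bloom{C},\bloom{P}.\mathrm{Reset}())$ at the update from interval $m$ to interval $m+1$: the physical filter that held $\bloom{CC}$ now holds $\bloom{P}$, the one that held $\bloom{C}$ now holds $\bloom{CC}$, and the one that held $\bloom{P}$ is cleared and relabelled $\bloom{C}$; in particular the three roles $\bloom{P},\bloom{CC},\bloom{C}$ are always held by three distinct physical filters. Consequently, if $m^\ast$ is the interval of $S$'s first request, then line~1 of \cref{alg_fine_grained} inserts $S$ into the physical filter playing the role of $\bloom{C}$ during interval $m^\ast$; that same physical filter is relabelled $\bloom{CC}$ at the next update and $\bloom{P}$ at the one after, and is reset at neither of those two updates. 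Hence, using no-false-negatives, the test $S\in\bloom{P}$ succeeds throughout interval $m^\ast+2$.

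It remains to check the timing. Interval $m^\ast+2$ equals $[\tau_{m^\ast}+2\epsilon,\ \tau_{m^\ast}+3\epsilon)$, and from $\tau_{m^\ast}\le t<\tau_{m^\ast}+\epsilon$ (where $t$ is the time of the first request) we get $\tau_{m^\ast}+2\epsilon\le t+2\epsilon<\tau_{m^\ast}+3\epsilon$, so the deadline $t+2\epsilon$ itself lies inside interval $m^\ast+2$. As the algorithm description indicates, $S$ keeps re-issuing its request in successive intervals until one succeeds (if one succeeds in interval $m^\ast$ or $m^\ast+1$ there is nothing more to prove). Its request in interval $m^\ast+2$ — which it may send as early as $\tau_{m^\ast+2}\le t+2\epsilon$ — is processed while $S\in\bloom{P}$, so \cref{alg_fine_grained} does not return $(\bot,\bot)$ but a reservation of size $\frac{\omega\cdot\matentry{k}{i}{j}}{\rho}$ valid for a further interval. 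This happens no later than $t+2\epsilon$, which is exactly the claim.

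I do not expect a genuine obstacle; the argument is elementary. The two places that need care are (i) getting the direction of the three-way rotation right, so that ``requested during interval $m$'' maps to ``member of $\bloom{P}$ during interval $m+2$'' and not $m+1$ or $m+3$; and (ii) pinning down which interval the deadline $t+2\epsilon$ falls into, since the $2\epsilon$ bound is tight and an off-by-one in the interval indexing would break it. I would also make the standing hypotheses explicit in the write-up — that $S$ retries each interval until it succeeds, and that the duplicate-resistant Bloom filters produce no false negatives — since the statement is vacuous or false without them.
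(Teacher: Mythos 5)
Your proposal is correct and follows essentially the same route as the paper's proof: track the requesting AS's entry through the $\bloom{C}\to\bloom{CC}\to\bloom{P}$ rotation across two updates of \cref{alg_update}, and observe that a retry at time $t+2\epsilon$ lands in the interval where the AS is in $\bloom{P}$, so \cref{alg_fine_grained} grants the reservation. The paper states this in three sentences; you merely make explicit the interval indexing, the physical-filter bookkeeping, and the standing hypotheses (per-interval retries, no Bloom-filter false negatives), all of which are consistent with the paper's argument.
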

\begin{proof}
Assume the AS requests a reservation at some time~$t$. The AS is then added to \bloom{C}.
At time $t+\epsilon$, the next interval will have started, and the AS will be in \bloom{CC} due to the Bloom filter rotation at the end of \cref{alg_update}.
If an AS sends a request at $t+2\epsilon$, the interval will have changed again so that the AS will be in \bloom{P}, and hence \cref{alg_fine_grained} will return a flyover reservation.
\end{proof}

\begin{theorem} \label{the_no_overallocation}
Assume that there are no false positives occurring at any of the Bloom filters. At every point in time, the sum of all granted flyover reservations for the interface-pair (i, j) of AS~k is smaller than $\omega\cdot\matentry{k}{i}{j}$.
\end{theorem}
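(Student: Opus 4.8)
The plan is to track, at an arbitrary fixed time $t$, exactly which ASes can be holding a live flyover reservation for the interface-pair $(i,j)$ of AS~$k$, and to bound their number by $\rho$. Recall from \cref{alg_fine_grained} that a reservation is only ever granted when $S \in \bloom{P}$, that every granted reservation has size $\frac{\omega\cdot\matentry{k}{i}{j}}{\rho}$, and that it expires after exactly one interval $\epsilon$. So the first step is to observe that any reservation alive at time $t$ must have been granted during the current interval (the one containing $t$), since a reservation granted in an earlier interval would already have expired. Let $\bloom{P}$, $\bloom{C}$, $\rho$ denote the state of these variables during this current interval (fixed by the most recent execution of \cref{alg_update}).

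Next I would argue that every AS that obtained a reservation in the current interval is an element of the set represented by $\bloom{P}$. This is immediate from the \texttt{if} guard in \cref{alg_fine_grained}: a reservation is returned only on the branch $S\in\bloom{P}$. Since we assume no Bloom-filter false positives, membership in $\bloom{P}$ is exact, so the number of distinct ASes holding a current-interval reservation is at most $|\bloom{P}|$. The third step is to relate $|\bloom{P}|$ to $\rho$: by the Bloom-filter rotation in \cref{alg_update}, the $\bloom{P}$ used in the current interval equals the $\bloom{C}$ collected two intervals ago, hence $|\bloom{P}| \le |\bloom{C}_{\text{penult}} \cup \bloom{C}_{\text{last}}|$, and the assignment $\rho \leftarrow \max(|\bloom{C}\cup\bloom{CC}|,\rho_{\min})$ at the start of the current interval gives $\rho \ge |\bloom{P}|$. (One should also note $\rho\ge\rho_{\min}\ge 1$ so the division is well-defined and there is no vacuous case.)

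Putting these together: the sum of all live reservations at time $t$ is at most $|\bloom{P}| \cdot \frac{\omega\cdot\matentry{k}{i}{j}}{\rho} \le \rho \cdot \frac{\omega\cdot\matentry{k}{i}{j}}{\rho} = \omega\cdot\matentry{k}{i}{j}$, and since the statement asks for a strict inequality I would either invoke that at least one provisioned slot is unused, or more cleanly note that the grant condition plus expiry means the count of \emph{active} reservations is strictly below $|\bloom{P}|$ unless every $\bloom{P}$-member both re-requested and has not yet expired — a corner case to handle by a careful ``$\le$ then strict'' argument, or by observing $\rho$ may strictly exceed $|\bloom{P}|$ whenever $\bloom{CC}$ contributes fresh ASes. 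The main obstacle I anticipate is precisely this bookkeeping around strictness and around the timing of the \cref{alg_update} rotation relative to $t$: one has to be careful that a reservation granted near the very end of the previous interval has indeed expired by $t$, which follows because its lifetime is exactly $\epsilon$ and a full interval has elapsed, but it needs to be stated explicitly rather than waved through.
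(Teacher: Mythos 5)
There is a genuine gap, and it sits exactly where the paper's proof does its real work. Your opening step claims that any reservation alive at time $t$ must have been granted during the interval containing $t$, ``since a reservation granted in an earlier interval would already have expired.'' This is false: \cref{alg_fine_grained} sets the expiration to $\text{now()}+\epsilon$, where $\text{now()}$ is the arrival time of the request, which is not aligned to interval boundaries. A reservation granted near the end of interval $n$ therefore survives almost to the end of interval $n+1$, so at a generic time $t$ the live reservations come from \emph{two} consecutive intervals. Your closing remark that ``a full interval has elapsed'' does not hold for such $t$. Because the two batches of grants are computed with different denominators ($\rho$ is reassigned by \cref{alg_update} between them), your bound $|\bloom{P}|\cdot\frac{\omega\cdot\matentry{k}{i}{j}}{\rho}\le\omega\cdot\matentry{k}{i}{j}$ only accounts for one batch; naively adding the second batch would give $2\omega\cdot\matentry{k}{i}{j}$, so the theorem does not follow from your argument as written.

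The paper's proof is built around precisely this overlap. It fixes four consecutive intervals with request sets $R_1,\dots,R_4$, observes that the grants live during interval $4$ are those to $R_1\cap R_3$ (at share $\omega\cdot\matentry{k}{i}{j}/|R_1\cup R_2|$) and to $R_2\cap R_4$ (at share $\omega\cdot\matentry{k}{i}{j}/|R_2\cup R_3|$), notes that an AS appearing in both can only use one reservation at a time, and then bounds the total by a case split on which share is larger, using $(R_1\cap R_3)\cup(R_2\cap R_4)\subseteq R_1\cup R_2$ in one case and $\subseteq R_2\cup R_3$ in the other. To repair your proof you would need to reintroduce this two-interval accounting and the set-inclusion step; the parts of your argument that are correct (grants only to members of $\bloom{P}$, $\rho\ge|\bloom{P}|$, $\rho\ge\rho_{\min}\ge 1$) are necessary but not sufficient. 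As a minor aside, your worry about strictness is reasonable but moot: the paper's own proof also only establishes $B\le\omega\cdot\matentry{k}{i}{j}$.
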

\begin{proof}

We formulate the algorithm in a form that is simpler to analyze.
For that, consider four arbitrary but consecutive intervals, where $R_1, R_2, R_3$, and $R_4$ denote the set of ASes that issue a request in the corresponding interval.
We want to prove that there is no over-allocation in the fourth interval.
For this, we need to show that the sum of the flyovers granted to the successful ASes in $R_3$ plus the bandwidth granted to the successful ASes in $R_4$ is less than $\omega\cdot\matentry{k}{i}{j}$.
Note that an AS in $R_3$ is only successful in receiving a flyover reservation if it also issued a request in the first interval, i.e., if the AS is in $R_1$. If it is successful, the AS gets a flyover reservation of $\frac{\omega\cdot\matentry{k}{i}{j}}{|R_1\cup R_2|}$ starting in the third interval.
Similarly, an AS in $R_4$ is only successful if it is also in $R_2$. If successful, it gets a bandwidth of $\frac{\omega\cdot\matentry{k}{i}{j}}{|R_2\cup R_3|}$ starting in the fourth interval.
Also, an AS that successfully got a reservation both in the third and fourth interval, i.e., it is in $R_1\cap R_2\cap R3\cap R4$, will only be able to use one of those reservations at times where they overlap.
Let $B$ be the total bandwidth provided as flyover reservations in the fourth interval. Then we have:
\begin{align*}
    B &\leq |R_1\cap R_2\cap R3\cap R4|\cdot\max \big\{\frac{\omega\cdot\matentry{k}{i}{j}}{|R_1\cup R_2|}, \frac{\omega\cdot\matentry{k}{i}{j}}{|R_2\cup R_3|}\big\} \\
    &~~~~+ |(R_1\cap R_3)\backslash(R_2\cap R_4)|\cdot \frac{\omega\cdot\matentry{k}{i}{j}}{|R_1\cup R_2|} \\
    &~~~~+ |(R_2\cap R_4)\backslash(R_1\cap R_3)|\cdot \frac{\omega\cdot\matentry{k}{i}{j}}{|R_2\cup R_3|}
\end{align*}
Case $\frac{\omega\cdot\matentry{k}{i}{j}}{|R_1\cup R_2|} \geq \frac{\omega\cdot\matentry{k}{i}{j}}{|R_2\cup R_3|}$:
\begin{align*}
    B &\leq |(R_1\cap R_3)\cap (R2\cap R4)|\cdot \frac{\omega\cdot\matentry{k}{i}{j}}{|R_1\cup R_2|} \\
    &~~~~+ |(R_1\cap R_3)\backslash(R_2\cap R_4)|\cdot \frac{\omega\cdot\matentry{k}{i}{j}}{|R_1\cup R_2|} \\
    &~~~~+ |(R_2\cap R_4)\backslash(R_1\cap R_3)|\cdot \frac{\omega\cdot\matentry{k}{i}{j}}{|R_1\cup R_2|} \\
    &= \frac{|(R_1\cap R_3)\cup (R_2\cap R_4)|}{|R_1\cup R_2|} \cdot \omega\cdot\matentry{k}{i}{j} \\
    &\leq \frac{|R_1\cup R_2|}{|R_1\cup R_2|} \cdot \omega\cdot\matentry{k}{i}{j} \\
    &= \omega\cdot\matentry{k}{i}{j} \\
\end{align*}
Case $\frac{\omega\cdot\matentry{k}{i}{j}}{|R_1\cup R_2|} < \frac{\omega\cdot\matentry{k}{i}{j}}{|R_2\cup R_3|}$:
\begin{align*}
    B &\leq |(R_1\cap R_3)\cap (R2\cap R4)|\cdot \frac{\omega\cdot\matentry{k}{i}{j}}{|R_2\cup R_3|} \\
    &~~~~+ |(R_1\cap R_3)\backslash(R_2\cap R_4)|\cdot \frac{\omega\cdot\matentry{k}{i}{j}}{|R_2\cup R_3|} \\
    &~~~~+ |(R_2\cap R_4)\backslash(R_1\cap R_3)|\cdot \frac{\omega\cdot\matentry{k}{i}{j}}{|R_2\cup R_3|} \\
    &= \frac{|(R_1\cap R_3)\cup (R_2\cap R_4)|}{|R_2\cup R_3|} \cdot \omega\cdot\matentry{k}{i}{j} \\
    &\leq \frac{|R_2\cup R_3|}{|R_2\cup R_3|} \cdot \omega\cdot\matentry{k}{i}{j} \\
    &= \omega\cdot\matentry{k}{i}{j} \\
\end{align*}
The use of $\rho_{min}$ in the algorithm ensures that $|R_1\cup R_2|$ and $|R_2\cup R_3|$ are both greater than zero.
\end{proof}

\section{Token-Bucket without Counter} \label{sec_appendix_token_bucket}
In standard literature, the token-bucket algorithm is either implemented using a counter and a timer or using a counter and a timestamp~\cite{networkrouting2007}. The first approach does not scale as for every flow (or every source AS, in the case of \system{}) a separate timer is needed. Also, timers are often not suitable for high-performance applications. The second approach is computationally already very efficient---we show through \cref{alg_token_bucket} that it is possible to further improve the implementation by only keeping a single timestamp, thus achieving better memory efficiency.

\begin{algorithm}[t]
\DontPrintSemicolon
  \KwInput{CIR, T, pkt, ts}
  pkt-len $\leftarrow$ len(pkt)\\
  now $\leftarrow$ time()\\
  pkt-time $\leftarrow$ pkt-len $\div$ CIR\\
  \If{\upshape max(ts, now) + pkt-time $\leq$ now + T}
    {
    ts $\leftarrow$ max(ts, now) + pkt-time\\
    \Return{\upshape true}
    }
  \Else
  {
    \Return{\upshape false}
  }
\caption{Memory-efficient token-bucket implementation. The configuration happens through a particular choice of the committed information rate (CIR), committed burst size (CBS), and the time interval (T), where they are related through $\text{T}\times\text{CIR} = \text{CBS}$. The token-bucket consists only of a timestamp (ts).}
\label{alg_token_bucket}
\end{algorithm}

\section{Comparison} \label{sec_appendix_comparison}
This section highlights the most important differences between GLWP, Colibri, and \system{}.

\subsection{Flexibility}
\paragraph{Traffic Control}
In GLWP and Colibri, AS-level reservations are almost static.
The size of a reservation in GLWP is always the same, even after renewal, where allocation matrix changes at on-path ASes constitute the only exception.
In Colibri, SegRs have a validity of approximately five minutes.
Therefore, if two reservation paths overlap, and during some time interval one SegR is fully utilized while the other is not, a source AS in Colibri cannot quickly move free capacity from one reservation to the other.
While even inherently impossible in GLWP, dynamically moving reservation capacity between overlapping paths can be done \emph{instantaneously} through traffic control in \system{} (see \cref{sec_traffic_control}).

\paragraph{Dynamic Bandwidths}
In Colibri, an AS specifies the bandwidth it wants to reserve in the reservation request.
Reserving only little bandwidth means that other ASes can reserve more.
GLWP and \system{} by default do not provide this flexibility.
A concrete design of a demand-aware flyover version is presented in \cref{sec_appendix_demand_aware}.

\paragraph{Allocation Matrix Changes}
In Colibri, traffic matrices are assumed to be static---the possibility to change them over time is not discussed in the literature.
In principle this is feasible however, by always computing with the more conservative values from the old and new allocation matrix during reservation setups and renewals. Decreasing an entry in the traffic matrix takes at least as long as the maximum validity period of the SegRs, which is in the order of minutes.
In GLWP reservations can stay valid for even longer time periods, and thus updating the traffic matrix also takes longer.
Moreover, increasing an allocation matrix entry in GLWP can cause reservations through seemingly unaffected interface-pairs to actually decrease.
All these issues do not occur in \system{}, as the duration of an allocation matrix update is in the order of seconds.

\paragraph{Reservation Service Recovery}
For stateful reservation systems such as Colibri, a collapsing reservation service has far-reaching consequences.
Loosing information about the authenticators means that existing reservations become unusable.
An AS can try to create new reservations, but those may be small or even get denied, because the existing but inaccessible reservations take up large portions of the link capacities.
Existing reservations for which an AS acts as transit provider are unaffected by a service failure, but they cannot be renewed, and no further reservations can be established.
The latter is due to the service not having information about the current reservations anymore, and hence it does not know the fraction of free network capacity.
In contrast, lost authenticators in GLWP and \system{} can be requested again in a single round trip.
Moreover, the service in GLWP is stateless with respect to transit reservations, and in \system{} the setup and renewal packets do not even need to go through a reservation service at on-path ASes. 

\paragraph{Incremental Deployment}
Whereas in GLWP and Colibri every AS on a reservation path is required to support the corresponding protocol, \system{} can be deployed incrementally: the header fields of the reservation setup and data transmission packets are ignored by ASes not participating in the protocol.
The forwarding and QoS guarantees increase with every AS deploying \system{}.

\subsection{Performance}\label{sec_appendix_comparison_performance}

\paragraph{Reservation Admission}
The bandwidth computation for a reservation request in \system{}, as specified in \cref{alg_fine_grained}, only takes \SI{30}{ns}.
The overall admission overhead including authentication and encryption operations amounts to \SI{618}{ns}, which means that admission can be performed at line rate.
In GLWP, the admission overhead strongly depends on the number of ASes along the reservation path.
Both the bandwidth computation according to the GMA algorithm as well as the number of MAC calculations increase with every additional on-path AS.
The inputs to the MACs are longer than in \system{}, and therefore more block cipher operations are needed in their computations.
Colibri requires \SI{0.4}{ms} and \SI{1.25}{ms} to process an EER- and SegR admission, respectively, which does not yet account for the authentication overhead~\cite{Colibri}.
Opposed to \system{}, ASes in GLWP and Colibri need to process the setup request not only once in the forward direction but also again when the packet is returned from the destination to the source.

\paragraph{Transport Layer}
In Colibri, a setup request is computationally much more expensive compared to GLWP and \system{}, which makes packet loss along the desired reservation path particularly undesirable.
While in GLWP and \system{} the source AS can simply re-issue a request packet after a timeout, Colibri needs a reliable transport protocol such as TCP \cite{TCP} or QUIC \cite{QUIC} between reservation services of neighboring ASes.
This adds complexity and requires additional state at the reservation services, which is important to consider given the frequent EER setup and renewal requests.

\paragraph{Communication Overhead}
Ideally, the header of a reservation packet is as small as possible, since a larger header consumes bandwidth that could otherwise be used for data traffic. 
Assuming all header fields that serve the same purpose also have the same length in GLWP, Colibri, and \system{}, e.g., the source AS identifier, the potentially included reservation path, the validation fields or the packet timestamp, then for unidirectional reservation traffic the Colibri header is \num{17} and the GLWP \SI{15}{\byte} longer than the \system{} header.\footnote{Assuming sizes for the remaining header fields as follows. GLWP: $\beta$~(\SI{4}{B}), AID~(\SI{8}{B}), tsExp~(\SI{4}{B}). Colibri: ResID~(\SI{12}{B}), Bw~(\SI{1}{B}), ExpT~(\SI{4}{B}), Ver~(\SI{1}{B}). \system{}: D~(\SI{1}{B}). Note that for unidirectional traffic the lenB~(\SI{2}{B})  field does not need to be included in \system{} packets.}
Hence for an average Internet path length of five AS-level hops, \system{} can support bidirectional reservations, i.e., include its BVFs of \SI{3}{\byte} each, to reach around the same header size as GLWP or Colibri require for unidirectional traffic.

\paragraph{Authenticator Updates}
Reservation services in Colibri and GLWP need to update their authenticators every time they renew a reservation.
In contrast, authenticator updates are rare events in \system{}, they only occur when the keys established through DRKey change.
In GLWP and \system{}, multiple critical applications share the same reservation, i.e., the same authenticators, and the reservation service in the source AS is responsible to fairly distribute the available reservation bandwidth among those applications.
In Colibri, however, every endpoint requests and then frequently renews the authenticators for its own EERs.

\paragraph{Lack of Available Bandwidth}
As bandwidth sizes in GLWP decrease exponentially with the length of a reservation path, they can become too small to be practically useful, in which case a reservation setup request fails.
In Colibri, reservation requests are denied if there is not enough free capacity along the reservation path.
In \system{}, a request is only rejected if a many ASes want to suddenly establish a reservation for the very first time at the same inter-face pair of the same AS (\cref{sec_appendix_fine_grained}).
Also, a request sent $2\epsilon$ after the initial request is guaranteed to succeed (\cref{th_two_epsilon}).

\paragraph{Reservation Granularity}
In Colibri, every end host has its own reservation, which not only implies more overhead at the end hosts due to reservation management, but also at ASes due to fine-grained reservation monitoring and policing, frequent reservation admissions and communications, and for keeping state.
In GLWP and \system{}, the reservation service of the source AS is responsible to fairly distribute reservation bandwidth among end hosts. This results in a much reduced overhead as compared to Colibri.

\subsection{Security}

\paragraph{Commonalities}
All three systems have been designed with high demands on their security properties.
In particular, they all provide botnet-size independence, the property that a legitimate inter-domain reservation cannot be indefinitely reduced by malicious ASes requesting a high number of reservations or excessive reservation bandwidth.
Colibri, GLWP, and \system{} all rely on a bandwidth monitor, a replay-suppression system, and a framework to efficiently establish shared symmetric keys.
They also have similar assumptions about the underlying network architecture, e.g., path stability and path symmetry, and time synchronization.

\paragraph{Assumptions and Requirements}
In addition to the assumptions and requirements shared by all three protocols, GLWP needs a dedicated sub-protocol to ensure the consistency of values promoted by neighboring ASes.
Also, GLWP assumes that no two adjacent ASes are malicious, and that symmetric keys are already established between every pair of ASes.
The latter assumption is necessary, because every on-path AS authenticates every other on-path AS during the reservation admission.

\paragraph{Malicious Parameter Announcement}
Colibri and GLWP need to have checks in place to detect ASes that announce malicious parameters (e.g., wrong information about their traffic matrices) contributing to the computation of the path-based reservation bandwidth.
In \system{} no such measures are necessary, as the only information an on-path AS needs to compute the bandwidth is the authenticated source AS identifier.
In particular, this computation does not depend on any values claimed by other ASes.

\paragraph{Monitoring}
As GLWP and Colibri are path-based reservation protocols, ASes need to rely on a probabilistic bandwidth monitor to police transit traffic of the (potentially arbitrarily) many reservation paths that cross their network.
As the number of ASes in the Internet is more stable and most notably way lower than the possible number of reservation paths, ASes in \system{} can deterministically monitor all reservation traffic.
False positives, i.e., misclassifications of non-overuse reservations as overuse reservations, and false negatives, i.e., misclassifications of overuse reservations as non-overuse reservations, are therefore inherently impossible in \system{}.

\paragraph{DDoS against Reservation Services}
In GLWP and Colibri, the reservation services constitute a target for DDoS attacks that attempt to overwhelm the admission procedure, which would cause the service to drop legitimate requests.
In contrast, bandwidth admission in \system{} happens at line rate, and hence a malicious request flood does not have an influence on honest requests--no honest request is ever dropped due to lack of computational resources.

\paragraph{Suitability for DoCile}
DoCile~\cite{docile} is a framework for the protection of key exchanges and reservation setup requests from denial-of-capability (DoC) attacks.
Consider a source AS~$S$ that wants to fetch symmetric keys from all ASes on a certain path~$P$, in order to subsequently establish a reservation over that path.
DoCile protects this communication by having AS~$S$ iteratively fetch keys and establish a reservation on every sub-path of $P$.
First, a key is requested from the first on-path AS~$O_1$, which is used to establish a reservation for the path [$S$, $O_1$].
Then, protected through this reservation, AS~$S$ requests a key from the second on-path AS~$O_2$ and subsequently establishes a reservation over the path [$S$, $O_1$, $O_2$].
This process is repeated until ultimately a reservation over $P$ is established.
Implementing DoCile in GLWP or Colibri would be inefficient, as a new reservation has to be created for each sub-path.
In \system{}, however, a flyover reservation can be reused in multiple DoCile iterations.
To provide the highest guarantees possible, DoCile assumes that the reservation system in use provides bidirectional reservations. Bidirectional reservations are only available in \system{}, but not in GLWP and Colibri.

\subsection{Complexity}
The flyover computation in \system{} is based on the idea that bandwidth between two interfaces is split equally among the requesting ASes.
Desirable properties such as non-zero bandwidth sizes or no-over-allocation directly follow from this computation.
In GLWP, the bandwidth size computation is based on the GMA~\cite{GMA} algorithm, and Colibri relies on N-Tube~\cite{ntube} for this purpose.
In contrast to flyovers, GMA and N-Tube rely on a theoretical foundation that is quite complex.
The simplicity of the flyover calculation is the reason that \system{} can achieve its high admission and traffic forwarding rates, and that our bandwidth monitoring system can accomplish 100\% precision.
We consider low complexity important for reasoning about \system{}'s security and to achieve wide-spread adoption in a real-world deployment.

\section{Demand-Aware Flyovers} \label{sec_appendix_demand_aware}
The flyover bandwidth computation algorithm presented in \cref{eq_bandwidth} equally shares the available capacity among all requesting ASes.
For an AS requesting a flyover it is thus not possible to express its bandwidth demands.
This demand-agnostic design enables highly efficient bandwidth admission directly at the border router, while still providing sufficiently high reservation bandwidths to run the critical applications targeted by our system.
However, flyover sizes can also be computed in other ways, in particular to also take into account bandwidth demands.
We designed \system{} to be independent of the concrete flyover computation: the function \texttt{getBandwidth()} used in \cref{alg_admission} can be instantiated with a custom algorithm.

\newcommand{\bwdem}[0]{\bw{\text{dem}}}
\newcommand{\bwmin}[0]{\text{\bw{min}}}
Concretely, to achieve demand-aware flyover reservations, one might add an authenticated bandwidth demand \bwdem (as well as the minimal acceptable bandwidth \bwmin{}) in the setup request (\cref{sec_reservation_setup}), and substitute \texttt{getBandwidth()} with, e.g., the demand-aware N-Tube algorithm that is also used in Colibri.
The setup request would therefore need to be adapted as follows:
\begin{align*}
    \text{Auth}_i &= \mac{\drkey{i}{S}}{\text{tsReq}, R_i, B_i, \bwdem{}, \bwmin{}} \\
    \textsc{SetupReq} &= S, \text{tsReq}, \bwdem{}, \bwmin{}, (R_i, B_i, \text{Auth}_i) \\
    \forall i & \in A \subseteq \{1, \dots, \ell\}
\end{align*}
Other packet fields or computations, e.g., the calculation of the authenticator (\cref{eq_auth}) or the RVF (\cref{eq_rvf}) do not need to be modified.
The N-Tube algorithm readily computes AS-local interface-pair reservations, which Colibri further assigns to specific paths in the form of segment reservations.
The computation of interface-pair reservations makes N-Tube a viable flyover algorithm.
Its fairness, security, and performance properties have already been extensively evaluated~\cite{Colibri,ntube}.
Due to its comparably large performance and state overhead, it may need to be deployed in a dedicated service inside the AS instead of being executed at the border router.
However, in contrast to Colibri, no segment and end-to-end reservations need to be computed and stored.

Note that even demand-aware protocols such as N-Tube implicitly become demand-agnostic when the aggregate demand exceeds the available capacity: Shrinking existing reservations to a fair share is necessary to allow all parties to communicate.
Finally, one of Helia's distinguishing features is that every AS can deploy its own flyover algorithm (possibly even at the granularity of single interface-pairs), it does not need be globally agreed upon.
This additional \emph{algorithmic} agility further eases deployment and favors diverse business cases.

\end{document}